\newtheorem {Theorem}                 {Theorem}         [section]
\newtheorem {theorem}      [Theorem]  {Theorem}
\newtheorem {meinAlgorithmus}    [Theorem]  {Algorithm}
\newtheorem {Corollary}    [Theorem]  {Corollary}
\newtheorem {Problem}        [Theorem]  {Problem}
\newtheorem {Fact}         [Theorem]  {Fact}
\newtheorem {Lemma}        [Theorem]  {Lemma}
\renewcommand\qed{\hspace*{\fill}$\Box$}
\journal{arXiv}
\begin{document}
\begin{frontmatter}
\title{On computing the $2$-vertex-connected components of directed graphs}
\author{Raed Jaberi\corref{cor1}}
\cortext[cor1]{Tel.: +49 3677 69 -2786; fax: +49 3677 69 -1237.}
\address{Faculty of Computer Science and Automation, Teschnische Universität Ilmenau, 
\\$98694$ Ilmenau, Germany}
\ead{raed.jaberi@tu-ilmenau.de}
\begin{abstract} 
In this paper we consider the problem of computing the $2$-vertex-connected components ($2$-vccs) of directed graphs. We present two new algorithms for solving this problem. The first algorithm runs in $O(mn^{2})$ time, the second in $O(nm)$ time. Furthermore, we show that the old algorithm of Erusalimskii and Svetlov runs in $O(nm^{2})$ time. In this paper, we investigate the relationship between $2$-vccs and dominator trees. We also present an algorithm for computing the $3$-vertex-connected components ($3$-vccs) of a directed graph in $O(n^{3}m)$ time, and we show that the $k$-vertex-connected components ($k$-vccs) of a directed graph can be computed in $O(mn^{2k-3})$ time. Finally, we consider three applications of our new algorithms, which are approximation algorithms for problems that are generalization of the problem of approximating the smallest $2$-vertex-connected spanning subgraph of $2$-vertex-connected directed graph.
\end{abstract}
\begin{keyword}
Graph algorithms \sep $2$-vertex-connected components \sep Strong articulation points \sep Approximation algorithms
\end{keyword}
\end{frontmatter}
\section{Introduction}
Let $G=(V,E)$ be a directed graph with $|V|=n$ vertices and $|E|=m$ edges. A strong articulation point of $G$ is a vertex whose removal increases the number of strongly connected components of $G$. A directed graph $G=(V,E)$ is said to be $k$-vertex-connected if it has at least $k+1$ vertices and the induced subgraph on $V\setminus X$ is strongly connected for every $X \subset V $ with $|X|<k$. Thus, a strongly connected graph $G=(V,E)$ is $2$-vertex-connected if and only if it has at least $3$ vertices and it contains no strong articulation points. The $2$-vertex-connected components of a strongly connected graph $G$ are its maximal $2$-vertex-connected subgraphs. The concept was defined in \cite{lite6}. For more information see \cite{lite2}.

In $2010$, Georgiadis \cite{lite3} gave a linear time algorithm to test whether a strongly connected graph $G$ is $2$-vertex-connected or not. Later, Italiano et al. \cite{lite2} gave a linear time algorithm for the same problem which is faster in practice than the algorithm of Georgiadis \cite{lite3}. Moreover, the algorithm of Italiano et al. \cite{lite2} can find all the strong articulation points of a directed graph $G$ in linear time. The algorithm from \cite{lite2} solved an open problem posed by Beldiceanu et al. ($2005$) \cite{lite4}.
In $1980$, Erusalimskii and Svetlov \cite{lite6} gave an algorithm for computing the $2$-vccs of a directed graph, whose running time was not analyzed. In this work we show that this algorithm runs in $O(nm^{2})$ time. Furthermore, we present two new algorithms for computing the $2$-vccs of a directed graph. The first algorithm runs in $O(mn^{2})$ time, and the second in $O(nm)$ time. The question posed by Italiano et al. \cite{lite2} whether the problem is solvable in linear time still remains open.

The remainder of this paper is organized as follows. In section \ref{def:Psec2}, we briefly describe the algorithm of Italiano et al. \cite{lite2} for finding the strong articulation points of a directed graph.
In section \ref{def:Psec3}, we briefly describe the algorithm of Erusalimskii and Svetlov \cite{lite6} for
computing the $2$-vccs of a directed graph and analyze its running time.
In section \ref{def:Psec4}, we present a new algorithm for computing the $2$-vccs that contain a certain vertex if such a component exists. Then we use this algorithm to compute all the $2$-vccs of a directed graph in $O(mn^{2})$ time. In section \ref{def:Psec5}, we present another new algorithm for computing all the $2$-vccs of a directed graph in $O(nm)$ time. Afterwards, we investigate the relationship between $2$-vccs and dominator trees in section \ref{def:rel2vccsdts}.
In section \ref{def:Psec7}, we present an algorithm for computing the $3$-vccs of a directed graph in $O(n^{3}m)$ time, and we show that the $k$-vccs of a directed graph can be computed in $O(mn^{2k-3})$ time. Finally in section \ref{def:Psec8}, we consider three applications of our new algorithms, which are approximation algorithms for problems that are generalization of the problem of approximating the smallest $2$-vertex-connected spanning subgraph of $2$-vertex-connected directed graph.
\section{The algorithm of Italiano et al.} \label{def:Psec2}
In this section, we briefly describe the algorithm of Italiano et al. \cite{lite2} for computing all the strong articulation points of a directed graph. This algorithm will be used later. The content of this section is based on  \cite{lite2}. We first explain some definitions and notations which will be used throughout this paper.
A \textit{flowgraph} $G(v)=(V,E,v)$ is a directed graph with $|V|=n$ vertices, $|E|=m$ edges, and a distinguished start vertex $v \in V$ such that every vertex $w\in V$ is reachable from $v$. For a flowgraph $G(v)=(V,E,v)$, the \textit{dominance relation} of $G(v)$ is defined as follows: a vertex $w \in V$ is a \textit{dominator} of vertex $u \in V$ if every path from $v$ to $u$ includes $w$. By $dom(w)$ we denote the set of dominators of vertex $w$. Obviously, the set of dominators of the start vertex in $G(v)$ is $dom(v)=\lbrace v \rbrace$.
For every vertex $w\in V$ with $w \neq v$, $\lbrace v,w \rbrace$ is a subset of $dom(w)$; we call $w,v$ the \textit{trivial dominators} of $w$. A vertex $u$ is a \textit{non-trivial dominator} in $G(v)$ if there is some $w \notin \lbrace v,u\rbrace$ such that $u \in dom(w)$. The set of all non-trivial dominators is called $D(v)$. The dominance relation is reflexive, transitive, and antisymmetric. A vertex $u\in V$ is an \textit{immediate dominator} of vertex $w \in V$ in $G(v)$ if $u\in dom(w)$ and all elements of $dom(w)\setminus \lbrace w \rbrace$ are dominators of $u$. Every vertex $w$ of $G(v)$ except the start vertex $v$ has a unique immediate dominator, which is denoted by $imd(w)$. The edges $(u,w)$, where $u$ is the immediate dominator of $w$, form a tree with root $v$, called the \textit{dominator tree} of $G(v)$, denoted by $DT(v)$. Vertex $w \in V$ is a dominator of vertex $u \in V$ in $G(v)$ if and only if $w$ is an ancestor of $u$ in $DT(v)$.

Let $G=(V,E)$ be a directed graph. Let $F$ be a subset of $E$ and let $U$ be a subset of $V$. We use 
$G\setminus F$ to denote the directed graph obtained from $G$ by deleting all the edges in $F$. We use $G\setminus U$ to denote the directed graph obtained form $G$ by removing all the vertices in $U$ and their incident edges. By $G[F]$ we denote the directed graph $(V[F],F)$ whose $V[F]=\lbrace w\mid \exists u\in V:(w,u) \in F $ or $(u,w) \in F \rbrace$. By $G[U]$ we denote the directed graph $(U,E[U])$ whose $E[U]=\lbrace (w,u) \mid w,u \in U $ and $ (w,u) \in E \rbrace$. $G[F]$ and $G[U]$ are subgraphs of $G$. The reversal graph of $G$ is the directed graph 
$G^{R}=(V,E^{R})$, where $E^{R}=\lbrace (w,u) \mid (u,w) \in E \rbrace$.

Let $G=(V,E)$ be a strongly connected graph and let $v$ be a vertex in $G$. Since $G^{R}$ is strongly connected, $G^{R}(v)=(V,E^{R},v)$ is a flowgraph. By $D^{R}(v)$ we denote the set of all non-trivial dominators in the flowgraph $G^{R}(v)$. \\
The algorithm of Italiano et al. is based on the following fact.
\begin{Fact}\label{def:satzvonitaliano}
\begin{rm}\cite{lite2}
\end{rm}Let $G=(V,E)$ be a strongly connected graph, and let $v$ be any vertex in $G$. Then vertex $w\in V$
with $w \neq v$ is a strong articulation point if and only if $w$ is a non-trivial dominator in the flowgraph $G(v)$ or in the flowgraph $G^{R}(v)$.
\end{Fact}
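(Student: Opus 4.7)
The plan is to prove the two directions separately, working directly from the definition of a strong articulation point (a vertex whose removal breaks strong connectivity) and the characterization of dominance in a flowgraph (a vertex $w$ dominates $u$ in $G(v)$ iff every $v$-to-$u$ path passes through $w$).

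For the easier direction ($\Leftarrow$), suppose $w$ is a non-trivial dominator in $G(v)$, so there exists $u \notin \{v,w\}$ with $w \in dom(u)$. Then every path from $v$ to $u$ in $G$ contains $w$, so in $G \setminus \{w\}$ there is no path from $v$ to $u$. Since $v,u \in V \setminus \{w\}$, the graph $G \setminus \{w\}$ is not strongly connected and $w$ is a strong articulation point. The case $w \in D^{R}(v)$ is symmetric: a non-trivial dominator in $G^{R}(v)$ corresponds, by reversing each edge, to a vertex $w$ such that every $u$-to-$v$ path in $G$ contains $w$, again disconnecting $G \setminus \{w\}$.

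The harder direction ($\Rightarrow$) is where the main work lies. Suppose $w \neq v$ is a strong articulation point; then there exist $x,y \in V \setminus \{w\}$, $x \neq y$, such that $G \setminus \{w\}$ contains no path from $x$ to $y$. The key step, and the main obstacle, is to show that we may choose the ``side'' (the flowgraph $G(v)$ or its reverse $G^{R}(v)$) in which $w$ appears as a non-trivial dominator, even though $v$ need not coincide with $x$ or $y$. The trick is the following dichotomy: if there is a path from $v$ to $y$ in $G$ avoiding $w$ \emph{and} a path from $x$ to $v$ in $G$ avoiding $w$ (both exist in the appropriate single graphs because $G$ is strongly connected), then concatenating them gives an $x$-to-$y$ path in $G \setminus \{w\}$, contradicting our choice of $x,y$. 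Hence one of the two options fails: either every $v$-to-$y$ path in $G$ passes through $w$, in which case $w \in dom(y)$ in $G(v)$ with $y \notin \{v,w\}$, so $w \in D(v)$; or every $x$-to-$v$ path in $G$ passes through $w$, which after reversing edges means every $v$-to-$x$ path in $G^{R}$ passes through $w$, so $w \in dom(x)$ in $G^{R}(v)$ with $x \notin \{v,w\}$, giving $w \in D^{R}(v)$.

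Combining the two directions yields the stated ``if and only if''. The only genuinely non-routine idea is the path-splicing argument that forces one of the two flowgraphs to witness $w$ as a non-trivial dominator; after that, everything follows immediately from the definitions.
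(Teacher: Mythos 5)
The paper does not prove this statement---it is imported verbatim as a Fact from Italiano et al.~\cite{lite2}---so there is no internal proof to compare against. Your argument is correct and is essentially the proof given in that reference: the backward direction follows directly from the definition of dominators, and the forward direction uses exactly the path-splicing dichotomy (if some $v$-to-$y$ path and some $x$-to-$v$ path both avoided $w$, their concatenation would give an $x$-to-$y$ path in $G\setminus\{w\}$). The only step worth making explicit is why the witness vertex ($y$ in the first case, $x$ in the second) is distinct from $v$: if, say, $y=v$, the trivial path from $v$ to itself avoids $w$, so that branch of the dichotomy cannot occur and the other branch must hold with $x\neq v$ (since $x\neq y$); with that remark added, the proof is complete.
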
 
The set of the strong articulation points of an arbitrary directed graph $G$ is the union of the strong articulation points of its
strongly connected components.
Algorithm \ref{algo:SAKItaliano} shows the algorithm of Italiano et al. \cite{lite2}. More information on this algorithm can be found in \cite{lite2,lite13}.
\begin{figure}[htbp]
\begin{meinAlgorithmus}[\textbf{SAVs$(G)$}]\begin{rm}(from \cite{lite2})\end{rm}\label{algo:SAKItaliano}\rm\quad\\[-5ex]
\begin{tabbing}
\quad\quad\=\quad\=\quad\=\quad\=\quad\=\quad\=\quad\=\quad\=\kill
\textbf{Input:} A strongly connected graph $G=(V,E)$.\\
\textbf{Output:} The strong articulation points of $G$.\\
{\small 1}\> Choose $v \in C$ arbitrarily.\\
{\small 2}\> \textbf{if} $G\setminus \lbrace v \rbrace$ is not strongly connected \textbf{then} \textbf{output} $v$.\\
{\small 3}\> Compute and \textbf{output} $D(v)$.\\
{\small 4}\> Calculate the reversal graph $G^{R}$.\\
{\small 5}\> Compute and \textbf{output} $D^{R}(v)$.
\end{tabbing}
\end{meinAlgorithmus}
\end{figure}   
\begin{Fact}\label{def:AlgoItalianoLaufzeit}
\begin{rm}\cite{lite2}
\end{rm}Algorithm \ref{algo:SAKItaliano} runs in $O(n+m)$ time.
\end{Fact}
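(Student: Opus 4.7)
The plan is to walk through the five lines of Algorithm \ref{algo:SAKItaliano} one by one, bound each by $O(n+m)$, and then sum the constant number of contributions to obtain the claim.

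Lines $1$, $2$, and $4$ are handled by standard techniques. Line $1$ picks a vertex in $O(1)$. Line $2$ tests strong connectivity of $G\setminus\{v\}$ via the classical two-pass search: run one traversal from some fixed vertex in $G\setminus\{v\}$ and a second one in its reversal, accepting iff every vertex is visited both times; this costs $O(n+m)$. Line $4$ builds the reversal $G^{R}$ by a single scan over the adjacency lists of $G$, again $O(n+m)$.

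The substantive lines are $3$ and $5$, which compute the non-trivial dominator sets $D(v)$ and $D^{R}(v)$ of the flowgraphs $G(v)$ and $G^{R}(v)$. My plan is to reduce each to constructing a single dominator tree. Once $DT(v)$ is available, the characterization recalled earlier in this section (``$u\in dom(w)$ iff $u$ is an ancestor of $w$ in $DT(v)$'') shows that $u\in D(v)$ exactly when $u$ has a proper descendant in $DT(v)$: the side condition $w\notin\{v,u\}$ forces $w\neq u$, and $w\neq v$ is automatic because the root $v$ of $DT(v)$ is never a proper descendant of any vertex. Hence $D(v)$ is the set of internal (non-leaf) vertices of $DT(v)$ and can be read off in one traversal in $O(n)$ time; the same argument handles $D^{R}(v)$ from the dominator tree of $G^{R}(v)$.

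The only non-elementary ingredient — and therefore the main obstacle — is the dominator-tree construction itself. Invoking a linear-time dominator-tree algorithm (for example that of Alstrup, Harel, Lauridsen and Thorup, or of Buchsbaum et al.) bounds each of Lines $3$ and $5$ by $O(n+m)$. Summing the contributions of the five lines gives the claimed $O(n+m)$ total.
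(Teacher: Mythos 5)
Your proposal is correct. The paper itself gives no proof of this Fact---it is imported verbatim from Italiano et al.\ \cite{lite2}---and your line-by-line accounting is exactly the standard argument behind it: the only non-elementary step is the dominator-tree construction, which is linear by \cite{lite12,lite1}, and reading off the non-trivial dominators as the internal nodes of $DT(v)$ (resp.\ $DT^{R}(v)$) costs only $O(n)$ more. The one cosmetic remark is that, under the intended semantics of \cite{lite2}, $D(v)$ excludes the root $v$ itself (which is why Line~$2$ tests $v$ separately), but this does not affect the running-time bound.
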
 
\section{Algorithm of Erusalimskii and Svetlov} \label{def:Psec3}
In this section, we briefly describe the algorithm of Erusalimskii and Svetlov \cite{lite6} for computing the $2$-vccs of a directed graph, and we analyze its running time. The latter analysis was missing in \cite{lite6}. The content of this section is based mainly on \cite{lite6}.
In \cite{lite6}, the authors provided an algorithm for computing all \textit{biblocks} of a directed graph, where the biblocks of a directed graph are its maximal strongly connected subgraphs that do not contain any strong articulation point. A biblock is either a $2$-vcc, a single vertex or two vertices which are connected by two antiparallel edges. 
In this paper we are only interested in computing the $2$-vccs of a directed graph.
Let $G=(V,E)$ be a strongly connected graph, and let $v$ be a strong articulation point in $G$. Then the  vertex $v$ does not necessarily occur in two or more $2$-vccs of $G$ \cite{lite6}. Moreover, $2$-vccs have the following property:
\begin{Fact}\label{def:2VCsAtMostOneCommonVertex}
\begin{rm}\cite{lite6} \end{rm}Let $C_1^{2vc},C_2^{2vc}$ be distinct $2$-vccs in directed graph $G=(V,E)$. Then $C_1^{2vc}$ and $C_2^{2vc}$ have at most one vertex in common. 
\end{Fact}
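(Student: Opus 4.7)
The plan is to argue by contradiction, exploiting the fact that the $2$-vccs are defined as \emph{maximal} $2$-vertex-connected subgraphs. Assume for contradiction that two distinct $2$-vccs $C_1^{2vc}$ and $C_2^{2vc}$ share two different vertices $u$ and $v$. I will show that the union $H := G[V(C_1^{2vc}) \cup V(C_2^{2vc})]$ is itself $2$-vertex-connected. Since $H$ strictly contains at least one of the $C_i^{2vc}$ (as $C_1^{2vc} \neq C_2^{2vc}$ and both are subgraphs of $H$), this would contradict maximality.

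To establish $2$-vertex-connectivity of $H$, note first that $H$ has at least three vertices since each $C_i^{2vc}$ already does. Then I have to show that for every $w \in V(H)$, the induced subgraph $H \setminus \{w\}$ is strongly connected. The driving principle is the elementary gluing fact that the union of two strongly connected subgraphs sharing at least one common vertex is strongly connected.

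I would split into three cases according to the position of the removed vertex $w$. If $w \in V(C_1^{2vc}) \setminus V(C_2^{2vc})$, then $C_2^{2vc}$ is untouched and hence still strongly connected, while $C_1^{2vc} \setminus \{w\}$ remains strongly connected because $C_1^{2vc}$ is $2$-vertex-connected; moreover, both $u$ and $v$ lie in $C_2^{2vc}$ and are therefore distinct from $w$, so the two pieces still share the common vertices $u,v$, and the gluing principle applies. The case $w \in V(C_2^{2vc}) \setminus V(C_1^{2vc})$ is symmetric. If $w \in V(C_1^{2vc}) \cap V(C_2^{2vc})$, then both $C_1^{2vc} \setminus \{w\}$ and $C_2^{2vc} \setminus \{w\}$ are strongly connected by $2$-vertex-connectivity, and since $|V(C_1^{2vc}) \cap V(C_2^{2vc})| \geq 2$ by hypothesis, at least one of $u,v$ differs from $w$ and remains a common vertex, so gluing again yields strong connectivity.

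The main subtlety, and essentially the only point that needs care, is verifying that the induced subgraph $H \setminus \{w\}$ really equals the union of the two strongly connected pieces $C_i^{2vc} \setminus \{w\}$; this is immediate because all edges in both pieces are inherited from $G$ and thus belong to $H$. Together these three cases show that $H$ is $2$-vertex-connected, contradicting the maximality of $C_1^{2vc}$ and $C_2^{2vc}$ and proving $|V(C_1^{2vc}) \cap V(C_2^{2vc})| \leq 1$.
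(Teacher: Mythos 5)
Your proof is correct. Note that the paper itself does not prove this statement at all --- it is imported verbatim as a Fact from Erusalimskii and Svetlov \cite{lite6} --- so there is no in-paper argument to compare against; what you have written is a valid self-contained justification. Your route is the natural one: if two distinct maximal $2$-vertex-connected subgraphs shared two vertices $u,v$, their union $H$ would again be $2$-vertex-connected, by the gluing principle for strongly connected subgraphs with a common vertex applied after deleting any single vertex, and this contradicts maximality. The three cases are handled correctly, and the two points that actually need checking --- that in the first case $w\neq u$ and $w\neq v$ because $u,v\in V(C_2^{2vc})$, and that in the third case at least one of $u,v$ survives the deletion and remains common to both pieces --- are exactly the ones you verify. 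Two trivial additions would make it airtight: the definition of $2$-vertex-connectivity also requires the case $X=\emptyset$, i.e.\ that $H$ itself is strongly connected, which follows from the same gluing via the shared vertex $u$; and $H$ properly contains each $C_i^{2vc}$ because neither $2$-vcc can contain the other (again by maximality), so the contradiction indeed applies to both. Neither point affects the validity of your argument.
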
 
In \cite{lite6}, the authors studied a class of directed graphs $L$ defined as follows:
A directed graph $G=(V,E)$ belongs to class $L$ if it satisfies the following conditions:
\begin{enumerate}[1.]
\item If $C_1,C_2,\ldots,C_t$ are the strongly connected components of $G$, then there are no edges between $C_i$ and $C_j$ for distinct $i,j\in \lbrace 1,\ldots,t\rbrace$.
\item For every strong articulation point $v$ the directed graph $G\setminus \lbrace v \rbrace$ satisfies $(1)$. 
\end{enumerate}
Let $G=(V,E)$ be a directed graph. By $U(G)$ we denote the undirected graph formed from $G$ by deleting 
the directions of the edges. In \cite{lite6}, the following was proved:
\begin{Fact}\label{def:2VC2CUG}
\begin{rm}\cite{lite6} \end{rm}Let $G=(V,E) \in L$ be a directed graph. The vertices of the $2$-vccs of $G$ coincide with the vertices of the $2$-connected components (i.e. $2$-vertex-connected components \cite{lite11}) of the undirected version $U(G)$.
\end{Fact}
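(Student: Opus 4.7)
The plan is induction on $|V|$ after reducing to the strongly connected case. Condition~(1) of class $L$ forbids edges between distinct SCCs of $G$, so the connected components of $U(G)$ coincide as vertex sets with the SCCs of $G$. Both 2-vccs of $G$ and 2-connected components of $U(G)$ lie inside a single SCC/component, so I may pass to an SCC $C$ of $G$ and assume $G = G[C]$ is strongly connected; this restriction still lies in $L$, because any strong articulation point of $G[C]$ is a SAP of $G$ (paths in $G\setminus\{v\}$ between two vertices of $C\setminus\{v\}$ stay in $C$, by the no-inter-SCC-edges property), and condition~(2) then transfers. The base case is then immediate: when $|V|\le 2$ neither 2-vccs nor 2-connected components exist, and when $G$ is 2-vertex-connected the unique 2-vcc is $V$; since $G\setminus\{u\}$ is strongly connected for each $u$, $U(G)\setminus\{u\}$ is connected, so $V$ is also the unique 2-connected component of $U(G)$.

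For the inductive step, pick a strong articulation point $v$ of $G$. By condition~(2), $G\setminus\{v\}$ has SCCs $S_1,\dots,S_k$ with $k\ge 2$ and no edges between them. Set $H_i := G[S_i\cup\{v\}]$. I would then proceed in three sub-steps: (a)~each $H_i$ is strongly connected and lies in $L$; (b)~every 2-vcc of $G$ is a 2-vcc of exactly one $H_i$ and vice versa, so the vertex sets of the 2-vccs of $G$ are the union of those of the $H_i$ (overlapping only at $v$, as in Fact~\ref{def:2VCsAtMostOneCommonVertex}); (c)~on the undirected side, $v$ is an articulation point of $U(G)$ because $U(G)\setminus\{v\}$ has the $S_i$ as its connected components, so the classical undirected block-cut decomposition gives that the 2-connected components of $U(G)$ are precisely those of the induced subgraphs $U(H_i)=U(G)[V(H_i)]$. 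Step~(b) uses the observation that any strongly connected vertex subset of $G$ avoiding $v$ lies in a single $S_i$, while one containing $v$ and having $\ge 3$ vertices must (by 2-vertex-connectedness) have its trace on $V\setminus\{v\}$ inside a single $S_i$, else $v$ would be a cut vertex of the supposed 2-vcc. Since $|V(H_i)|<|V|$, the induction hypothesis matches 2-vccs of $H_i$ with 2-connected components of $U(H_i)$, and (b), (c) then assemble the equality for $G$.

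The main obstacle is step~(a) — verifying $H_i\in L$, since class $L$ is not obviously inherited by induced subgraphs. For each SAP $w\neq v$ of $H_i$ I must show that $H_i\setminus\{w\}$ has no edges between its SCCs. The key technical claim is that any walk in $G\setminus\{w\}$ between two vertices of $V(H_i)$ which leaves $V(H_i)$ must exit and re-enter via $v$, since the only ``gateway'' between $S_i$ and any other $S_j$ is $v$ (by condition~(1) applied to $G\setminus\{v\}$), and hence can be short-circuited to a walk staying inside $V(H_i)\setminus\{w\}$. Consequently, reachability in $G\setminus\{w\}$ restricted to $V(H_i)\setminus\{w\}$ coincides with reachability in $H_i\setminus\{w\}$, so $w$ is a SAP of $G$ as well and the SCCs of $H_i\setminus\{w\}$ are precisely the restrictions of the SCCs of $G\setminus\{w\}$ to $V(H_i)\setminus\{w\}$. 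Condition~(2) applied to $G$ then yields immediately that there are no edges between SCCs of $H_i\setminus\{w\}$, completing the verification and hence the induction.
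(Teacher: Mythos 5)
The paper does not actually prove this statement: it is imported verbatim as a Fact from Erusalimskii and Svetlov \cite{lite6}, so there is no in-paper argument to compare yours against. Judged on its own, your induction is correct and I found no genuine gap. The reduction to a strongly connected $G[C]$ works because condition (1) makes $G$ an edge-disjoint union of its SCCs, so the SCC structure after deleting a vertex, and hence membership in $L$, restricts correctly to $C$. The heart of the matter is indeed your step (a), and your ``gateway'' observation carries it: condition (2) applied to the strong articulation point $v$ forbids edges between distinct $S_i$ and $S_j$, so any walk in $G\setminus\{w\}$ that leaves $V(H_i)$ must exit and re-enter through $v$ and can be short-circuited; this gives both that $H_i$ is strongly connected and that reachability in $H_i\setminus\{w\}$ agrees with reachability in $G\setminus\{w\}$ restricted to $V(H_i)\setminus\{w\}$, whence $w$ is a strong articulation point of $G$ and condition (2) for $H_i$ is inherited from condition (2) for $G$. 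Steps (b) and (c) are the directed and undirected versions of the same maximality transfer: (b) is essentially Lemma \ref{def:RSAKszken2Vfkenv} together with $|V(H_i)\cap V(H_j)|=1$ for $i\neq j$, and (c) is the standard block decomposition of $U(G)$ at the cut vertex $v$. If you write this up in full, two small points deserve explicit mention: the case split is exhaustive because a strongly connected graph on at least three vertices that is not $2$-vertex-connected always has a strong articulation point, and $H_i$ is an \emph{induced} subgraph of $G$, which is what makes ``$2$-vertex-connected subgraph of $H_i$'' and ``$2$-vertex-connected subgraph of $G$ contained in $V(H_i)$'' the same notion in step (b).
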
 
The main idea behind the algorithm of Erusalimskii and Svetlov \cite{lite6} is as follows. Given a directed graph $G=(V,E)$, the algorithm computes a directed graph $G' \in L$ such that the $2$-vccs of $G$ coincide with the $2$-vccs of $G'$. Then all $2$-vccs of $G'$ can be easily computed by using Fact \ref{def:2VC2CUG}.  Algorithm \ref{algo:AlgoErusalimskiiSvetlov} shows this algorithm \cite{lite6}:
\begin{figure}[htbp]
\begin{meinAlgorithmus}[\textbf{ErusalimskiiSvetlov$(G)$}]\label{algo:AlgoErusalimskiiSvetlov} \begin{rm}(from \cite{lite6}) \end{rm}\rm\quad\\[-5ex]
\begin{tabbing}
\quad\quad\=\quad\=\quad\=\quad\=\quad\=\quad\=\quad\=\quad\=\quad\=\kill
\textbf{Input:} A directed graph $G=(V,E)$.\\
\textbf{Output:} The $2$-vccs of $G$.\\
{\small 1}\> \textbf{Repeat}\\
{\small 2}\>\> Compute the strongly connected components of $G$. \\
{\small 3}\>\> Remove from $G$ the edges between the strongly connected components of $G$. \\
{\small 4}\>\> \textbf{for} every vertex $v \in V$ \textbf{do}\\
{\small 5}\>\>\>\> Compute the strongly connected components of $G \setminus \lbrace v\rbrace$.\\
{\small 6}\>\>\>\> Remove from $G$ the edges between the strongly connected  \\
{\small 7}\>\>\>\> components  of $G \setminus \lbrace v\rbrace$.\\
{\small 8}\> \textbf{until} no edge was removed during step $6$.\\
{\small 9}\> We obtain a directed graph $G'\in L$.\\
{\small 10}\> Compute the $2$-connected components $ C^{2vc}_1,C^{2vc}_2,\ldots,C^{2vc}_k $ of $U(G')$.\\
{\small 12}\> \textbf{Output} $ C^{2vc}_1,C^{2vc}_2,\ldots,C^{2vc}_k $.
\end{tabbing}
\end{meinAlgorithmus}
\end{figure} 
\begin{Fact}\label{def:korrektheitEundS}
\begin{rm}\cite{lite6} \end{rm}Let $G=(V,E)$ be a directed graph and let $G'\in L$ be the directed graph obtained after running algorithm \ref{algo:AlgoErusalimskiiSvetlov} on the input graph $G$ (in step $9$), then the $2$-vccs of $G$ coincide with the $2$-vccs of $G'$.
\end{Fact}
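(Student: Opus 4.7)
The plan is to establish the stronger invariant that at the beginning of every iteration of the outer repeat-loop the $2$-vccs of the current graph (viewed as a subgraph of $G$) coincide with the $2$-vccs of $G$. At termination the current graph is $G'$, so the claim follows. I would proceed by induction on the number of edge-deletion operations performed; initially the current graph is $G$ and the invariant is trivial.

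For the inductive step, let $H$ be the current graph, fix a $2$-vcc $C$ of $G$ (which by the inductive hypothesis is also a $2$-vcc of $H$), and let $(u,w)\in E(C)$. I must show that $(u,w)$ is not discarded during the current iteration. An edge is only deleted if it is an inter-SCC edge of $H$ (step~3) or of $H\setminus\{v\}$ for some $v\in V$ (step~6). In the first case, $C$ is strongly connected in $H$, so $u$ and $w$ share an SCC. In the second case, if $v\notin V(C)$ then $C\subseteq H\setminus\{v\}$ remains strongly connected; if $v\in V(C)\setminus\{u,w\}$ then the $2$-vertex-connectivity of $C$ guarantees that $C\setminus\{v\}$ is strongly connected and still contains $u$ and $w$; and if $v\in\{u,w\}$, the edge $(u,w)$ is absent from $H\setminus\{v\}$ and cannot be an inter-SCC edge at all. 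In every case $(u,w)$ survives, so $C$ persists as a $2$-vertex-connected subgraph of the new current graph; its maximality carries over, because any strictly larger $2$-vertex-connected subgraph of the new graph would also be a $2$-vertex-connected subgraph of $G$, contradicting the maximality of $C$ there.

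With the invariant in hand, the equality of the $2$-vccs of $G$ and $G'$ follows cleanly. The invariant itself yields that every $2$-vcc of $G$ is a $2$-vcc of $G'$. Conversely, any $2$-vcc $C^{\ast}$ of $G'$ is a $2$-vertex-connected subgraph of $G$, hence contained in some $2$-vcc $C$ of $G$; since $C$ is in turn a $2$-vcc of $G'$ by the forward direction, the maximality of $C^{\ast}$ in $G'$ together with Fact~\ref{def:2VCsAtMostOneCommonVertex} forces $C^{\ast}=C$.

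The step I expect to be the main obstacle is the subcase $v\in V(C)\setminus\{u,w\}$: this is precisely where the full strength of $2$-vertex-connectivity, as opposed to mere strong connectivity, is needed, and it is what makes Algorithm~\ref{algo:AlgoErusalimskiiSvetlov} correct in the first place. The remaining bookkeeping, in particular the behaviour of maximality under subgraph inclusion, is routine.
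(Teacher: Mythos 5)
The paper offers no proof of this Fact at all: it is quoted from Erusalimskii and Svetlov \cite{lite6} as an external result, so there is nothing to compare your argument against line by line. Judged on its own, your proof is correct and self-contained. The loop invariant (every edge of a $2$-vcc of $G$ survives every deletion step, hence the $2$-vccs of the current graph always coincide with those of $G$) is the right idea, and your case analysis is complete: intra-SCC edges survive step~3 because a $2$-vcc is strongly connected; for step~6 you correctly split on whether $v$ lies outside $C$, inside $C$ but off the edge, or is an endpoint of the edge, and the middle subcase is indeed exactly where $2$-vertex-connectivity (strong connectivity of $C\setminus\{v\}$, using $|C|\geq 3$) is needed. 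The maximality transfer in both directions is also handled properly, since any $2$-vertex-connected subgraph of the edge-deleted graph is one of $G$ as well. One cosmetic remark: in the converse direction the appeal to Fact~\ref{def:2VCsAtMostOneCommonVertex} is unnecessary --- $C^{\ast}\subseteq C$ with both $2$-vertex-connected in $G'$ already forces $C^{\ast}=C$ by maximality of $C^{\ast}$. Note also that your proof establishes the coincidence of $2$-vccs for the graph obtained at \emph{any} point of the edge-deletion process, which is slightly stronger than the stated Fact (which only concerns the final graph $G'$) and does not use membership of $G'$ in the class $L$; that membership is only needed afterwards, to justify the reduction to undirected $2$-connected components via Fact~\ref{def:2VC2CUG}.
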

\begin{Theorem}\label{def:AlgoRaedJaberiLaufzeit}
The running time of algorithm \ref{algo:AlgoErusalimskiiSvetlov} is $O(nm^{2})$.
\end{Theorem}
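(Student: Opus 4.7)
The plan is to bound separately the work done in a single pass of the outer \textbf{Repeat} loop and the number of times that loop can execute, then multiply.

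First I would analyze the cost of one iteration (lines 2--7). Computing the SCCs of $G$ and removing the inter-component edges takes $O(n+m)$ time using Tarjan's algorithm. The inner \textbf{for} loop of line~4 is executed $n$ times, and each iteration computes the SCCs of $G\setminus\{v\}$ and deletes the edges between them, again in $O(n+m)$ time. Hence one pass of the outer loop costs $O(n(n+m)) = O(nm)$ time, using the standing assumption $m = \Omega(n)$ that applies once isolated vertices are handled.

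Next I would bound the number of outer-loop iterations. The termination condition (line~8) says we stop exactly when an entire pass deletes no edge. Consequently every iteration except the last removes at least one edge from $G$, and no edge is ever re-inserted. Since the initial graph has $m$ edges, the outer loop executes at most $m+1$ times. Multiplying gives $(m+1)\cdot O(nm) = O(nm^2)$ for lines 1--8. The post-processing in lines 10--12 (building $U(G')$ and computing its undirected biconnected components via Hopcroft--Tarjan) costs only $O(n+m)$, so it does not affect the bound.

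The main subtlety is the iteration bound: one must observe that edges are only ever deleted (never added) and that any non-terminating iteration deletes at least one edge, so the global potential $|E|$ strictly decreases until termination. Once this monotonicity is stated, the arithmetic is routine and the claimed $O(nm^2)$ bound follows.
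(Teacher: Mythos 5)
Your proposal is correct and follows essentially the same argument as the paper: bound one pass of the repeat-loop by $O(n(n+m))$, bound the number of passes by $O(m)$ via the observation that each non-terminating pass deletes at least one edge and edges are never re-inserted, and note that the final biconnected-components step is linear. Your version is in fact slightly more careful than the paper's, since you account for the final edge-free pass ($m+1$ iterations) and state the $m=\Omega(n)$ assumption explicitly.
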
 
\begin{proof} The number of iterations of the repeat-loop is at most $m$ since at least one edge is removed in each iteration. The strongly connected components of a directed graph can be found in linear time using Tarjan's algorithm \cite{lite5}. In each iteration of the repeat-loop, steps $4$--$7$ require $O(n(n+m))$ time. The $2$-connected components of an undirected graph can be computed in linear time using Tarjan's algorithm \cite{lite5}. Thus, the total running time of algorithm \ref{algo:AlgoErusalimskiiSvetlov} is $O(m(n(m+n)))=O(nm^{2})$. 
\end{proof}
\section{Computing $2$-vertex-connected components that contain a certain vertex} \label{def:Psec4}
In this section, we present a new algorithm for computing all the $2$-vccs of a directed graph $G=(V,E)$ that contain a certain vertex $v \in V$. Note that it can happen that a vertex is not contained in any $2$-vcc, as Figure \ref{fig:KnoteninkeinerK} illustrates.
   \begin{figure}[htbp]
    \centering
    \begin{tikzpicture}[xscale=2,yscale=1]

        \tikzstyle{every node}=[color=black,draw,circle,minimum size=0.65cm]
        \node (v0) at (0,0) {$0$};
        \node (v1) at (2,-1) {$1$};
        \node (v2) at (2.5,1.5) {$2$};
        \node (v3) at (1, 2) {$3$};
        \node (v4) at (1,-1.5) {$4$};
        \node (v5) at (2,0.5) {$5$};
        \node (v6) at (-1,1.5) {$6$};
        \node (v7) at (-1,-1.5) {$7$};
       \begin{scope}
       
            \tikzstyle{every node}=[auto=right]   
            \draw [-triangle 45] (v4) to  (v7);       
             \draw [-triangle 45] (v0) to  (v3); 
             \draw [-triangle 45] (v3) to  (v5);
             \draw [-triangle 45] (v4) to (v0);
             \draw [-triangle 45] (v4) to (v5);
             \draw [-triangle 45] (v5) to  [bend right ](v4);
             \draw [-triangle 45] (v6) to (v0);
             \draw [-triangle 45] (v0) to  [bend right ](v6);
             \draw [-triangle 45] (v7) to (v0);
             \draw [-triangle 45] (v0) to  [bend right ](v7);
             \draw [-triangle 45] (v7) to (v6);
             \draw [-triangle 45] (v6) to  [bend right ](v7);
             \draw [-triangle 45] (v3) to  [bend right ](v0);             
             \draw [-triangle 45] (v5) to  [bend right ](v3);
             \draw [-triangle 45] (v0) to  [bend right ](v4);
             \draw [-triangle 45] (v2) to  [bend right ](v3);
             \draw [-triangle 45] (v4) to  [bend right ](v1);
             \draw [-triangle 45] (v1) to [bend right ]  (v2);

        \end{scope}
    \end{tikzpicture}
     \caption{Vertices $1,2$ do not lie in any $2$-vcc.}
     \label{fig:KnoteninkeinerK}
   \end{figure}
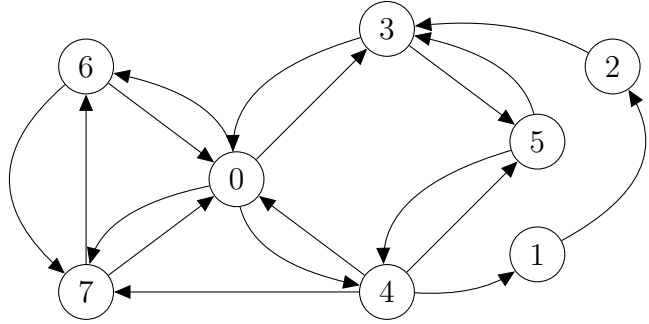
We may always assume that $G$ is strongly connected.\\
Let $G=(V,E)$ be a strongly connected graph and let $v$ be an arbitrary vertex in $G$. We consider the dominator tree $DT(v)$ of the flowgraph $G(v)=(V,E,v)$. By $K(v)$ we denote the set of direct successors of the root $v$ in the dominator tree $DT(v)$. A vertex $w\in V$ belongs to the set $K(v)$ if and only if $(v,w)\in E$ or there exist two vertex-disjoint paths form $v$ to $w$ in $G$. Let $G^{R}=(V,E^{R})$ be the reversal graph of $G$. We consider the dominator tree $DT^{R}(v)$ of $G^{R}(v)$ and denote by $K^{R}(v)$ the set of direct successors of the root $v$ in $DT^{R}(v)$ .
\begin{Lemma}\label{def:VKv2vc}
Let $G=(V,E)$ be a strongly connected graph and let $v$ be an arbitrary vertex in $G$. Then only elements of  $((K(v) \cap K^{R}(v)) \cup \lbrace v \rbrace)$ can belong to the $2$-vccs which contain the vertex $v$.
\end{Lemma}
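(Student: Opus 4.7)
The plan is to fix any $w \in V \setminus \{v\}$ that lies in some $2$-vcc $C$ of $G$ together with $v$, and show that $w$ must belong to both $K(v)$ and $K^{R}(v)$. The case $w = v$ is trivial, so this is all that is needed.

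First I would unwind the definitions. By the definition of a $2$-vcc, $C$ is a maximal $2$-vertex-connected subgraph of $G$, so by definition $C$ has at least three vertices and $C \setminus \{u\}$ is strongly connected for every $u$ in $C$. In particular, for every $u$ in $C \setminus \{v,w\}$ there is a directed path from $v$ to $w$ inside $C$ that avoids $u$, and, by the very same argument applied to $C \setminus \{u\}$ viewed from $w$, a directed path from $w$ to $v$ inside $C$ that avoids $u$. Via Menger's theorem applied inside the $2$-vertex-connected subgraph $C$, this yields two internally vertex-disjoint paths from $v$ to $w$, and two internally vertex-disjoint paths from $w$ to $v$, all lying inside $C \subseteq G$.

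Next I would translate this into membership of $K(v)$ and $K^{R}(v)$ using the characterization recalled in the excerpt just before the statement. The existence of two internally vertex-disjoint paths from $v$ to $w$ in $G$ means that for every vertex $u \notin \{v,w\}$ at least one of those paths avoids $u$, so no such $u$ dominates $w$ in the flowgraph $G(v)$; hence $\mathit{imd}(w) = v$ in $DT(v)$, i.e.\ $w \in K(v)$. Symmetrically, the two internally vertex-disjoint paths from $w$ to $v$ in $G$ correspond, upon reversing all edges, to two internally vertex-disjoint paths from $v$ to $w$ in $G^{R}$, giving $w \in K^{R}(v)$ by the same dominator-tree argument applied to $G^{R}(v)$. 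Combining the two memberships yields $w \in K(v) \cap K^{R}(v)$, completing the proof.

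The argument is essentially definitional once one is willing to invoke Menger's theorem (or equivalently the dominator-tree characterization of direct successors of the root, which the excerpt states explicitly). The only subtle point, and the one I would be most careful about, is that the two vertex-disjoint paths must be constructed \emph{inside} $C$ rather than merely inside $G$; otherwise the argument would only use that $v$ and $w$ lie together in some $2$-vertex-connected subgraph, which is what we want, but one should still make sure to quote $2$-vertex-connectivity of $C$ itself and not accidentally use a property of $G$. Once that is handled cleanly, the rest is a direct appeal to the characterization of $K(v)$ and $K^{R}(v)$.
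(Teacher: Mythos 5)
Your proof is correct and is essentially the paper's argument run in the forward direction: the paper proves the contrapositive (if $w\notin (K(v)\cap K^{R}(v))\cup\{v\}$ then a non-trivial dominator blocks two vertex-disjoint paths between $v$ and $w$), while you go directly from $2$-vertex-connectivity of the component to two vertex-disjoint paths via Menger and then to $w\in K(v)\cap K^{R}(v)$ using the same characterization of $K(v)$. If anything, your version is slightly more explicit about why the absence/presence of two vertex-disjoint paths settles membership in a common $2$-vcc, a step the paper leaves implicit.
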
 
\begin{proof} Let $w \in V$, and assume that $w\notin  (K(v) \cap K^{R}(v)) \cup \lbrace v \rbrace$, i.e. $w\notin K(v) \cap K^{R}(v)$ and $w\neq v$. Then $w \in (V\setminus (K(v)\cup \lbrace v \rbrace)\cup (V\setminus (K^{R}(v)\cup \lbrace v \rbrace)$. There are two cases to consider:
\begin{enumerate}[1.]
\item $w\notin  K(v)\cup \lbrace v \rbrace$. Then there exists a non-trivial dominator $s$ such that every path from $v$ to $w$ includes $s$ in $G$. Therefore, there are no two vertex-disjoint paths from $v$ to $w$ in $G$. 
\item $w\notin K^{R}(v)\cup \lbrace v \rbrace$. We argue as in case $(1)$.
\end{enumerate}
\end{proof}
\begin{Lemma}\label{def:RSAKszken2Vfkenv}
Let $G=(V,E)$ be a directed graph and let $v$ be a strong articulation point in $G$. Let $C^{2vc}$ be a $2$-vcc of $G$ with $v\in C^{2vc}$. Then all vertices of $C^{2vc}\setminus \lbrace v\rbrace$ lie in a strong connected component $C$ of $G \setminus \lbrace v\rbrace$, i.e. $C^{2vc}\setminus \lbrace v\rbrace \subseteq C$. 
\end{Lemma}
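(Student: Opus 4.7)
The plan is to use the defining property of a $2$-vertex-connected subgraph directly. Since $C^{2vc}$ is, by definition, a $2$-vertex-connected subgraph of $G$, it has at least three vertices and the induced subgraph on $C^{2vc}\setminus\{u\}$ is strongly connected for every single vertex $u\in C^{2vc}$. Applying this with $u=v$, the vertex set $C^{2vc}\setminus\{v\}$ together with the edges of $C^{2vc}$ between these vertices forms a strongly connected subgraph.

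Next I would observe that this strongly connected subgraph is actually a subgraph of $G\setminus\{v\}$. Indeed, the edges of $C^{2vc}$ come from $E$, and none of the edges between vertices of $C^{2vc}\setminus\{v\}$ is incident to $v$, so they all survive the removal of $v$ from $G$. Hence, for any two vertices $x,y\in C^{2vc}\setminus\{v\}$, there is a directed path from $x$ to $y$ (and from $y$ to $x$) inside $G\setminus\{v\}$.

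From this it follows immediately that all vertices of $C^{2vc}\setminus\{v\}$ are mutually reachable in $G\setminus\{v\}$, and therefore they all lie in the same strongly connected component $C$ of $G\setminus\{v\}$, which gives $C^{2vc}\setminus\{v\}\subseteq C$ as claimed.

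I do not foresee a real obstacle here; the only subtle point is to make explicit that the definition of $2$-vertex-connectivity applied inside $C^{2vc}$ yields strong connectivity of $C^{2vc}\setminus\{v\}$ \emph{as a subgraph of $G\setminus\{v\}$}, so that the SCC of $G\setminus\{v\}$ containing any one vertex of $C^{2vc}\setminus\{v\}$ automatically contains all of them. Note that the hypothesis that $v$ is a strong articulation point is not actually needed for the conclusion; it only fixes the setting in which the lemma will later be applied.
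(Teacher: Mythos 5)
Your proof is correct and follows essentially the same route as the paper's: both invoke the $2$-vertex-connectivity of $C^{2vc}$ to conclude that $G[C^{2vc}\setminus\{v\}]$ is strongly connected, observe that it is a subgraph of $G\setminus\{v\}$, and conclude that $C^{2vc}\setminus\{v\}$ lies in a single strongly connected component. Your remark that the hypothesis on $v$ being a strong articulation point is not needed is also consistent with the paper, which restates the same lemma without that hypothesis as Lemma \ref{def:RSAKszken2Vfken}.
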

\begin{proof} Since $C^{2vc}$ is a $2$-vcc of $G$, the directed graph $G[C^{2vc}]$ does not contain any articulation point. Thus, $G[C^{2vc}\setminus \lbrace v\rbrace]$ is strongly connected. Moreover, $G[C^{2vc}\setminus \lbrace v\rbrace]$ is a subgraph of $G \setminus \lbrace v\rbrace$. Consequently, $C^{2vc}\setminus \lbrace v\rbrace$ is a subset of a strongly connected component of $G \setminus \lbrace v\rbrace$.
\end{proof}
Now we can describe our algorithm for computing the $2$-vccs of a directed graph $G=(V,E)$ that contain $v$.
 
\begin{figure}[htbp]
\begin{meinAlgorithmus}[\textbf{2VCCsAlgorithm$1(G)$}]\label{algo:Algorithmus1RaedJaberi}\rm\quad\\[-5ex]
\begin{tabbing}
\quad\quad\=\quad\=\quad\=\quad\=\quad\=\quad\=\quad\=\quad\=\quad\=\kill
\textbf{Input:} A directed graph $G=(V,E)$ and a vertex $v \in V$.\\
\textbf{Output:} The $2$-vccs that contain $v$.\\
{\small 1}\> $G=(V,E)\leftarrow $ the strongly connected component $C_v$ of $G$ with $v\in C_v$.\\
{\small 2}\> \textbf{if} $G$ is $2$-vertex-connected \textbf{then} \\
{\small 3}\>\>\> \textbf{Output} $V$. \\
{\small 4}\> \textbf{else if} $v$ is not a strong articulation point in $G$ and $|K(v) \cap K^{R}(v)|\geq 2$ \textbf{then}\\
{\small 5}\>\>\> Recursively compute the $2$-vccs of $G[(K(v) \cap K^{R}(v))\cup \lbrace v \rbrace]$ that \\ 
{\small  }\>\>\>\>\> contain $v$ and \textbf{output} them.\\ 
{\small 6}\> \textbf{else if} $v$ is a strong articulation point in $G$ \textbf{then}\\
{\small 7}\>\>\> Compute the strongly connected components of $G \setminus \lbrace v \rbrace$.\\
{\small 8}\>\>\> \textbf{for} every strongly connected component $C$ of $G \setminus \lbrace v \rbrace$ \textbf{do}\\
{\small 9}\>\>\>\>\> \textbf{if} $G[C \cup  \lbrace v \rbrace]$ is strongly connected and $|C|\geq 2$ \textbf{then}\\
{\small 10}\>\>\>\>\>\>\> Recursively compute the $2$-vccs of $G[C \cup \lbrace v \rbrace]$ that contain $v$. \\
{\small 11}\>\>\> \textbf{Output} all the computed $2$-vccs.
\end{tabbing}
\end{meinAlgorithmus}
\end{figure}
Algorithm \ref{algo:Algorithmus1RaedJaberi} works as follows. First, line $1$ finds the strongly connected component $C_v$ of $G$ with $v\in C_v$ using Tarjan's algorithm \cite{lite5} and assigns the directed graph $G[C_v]$ to $G$ because all the $2$-vertex-connected components which contain $v$ lie in $G[C_v]$. Then, line  $2$ tests whether $G$ is $2$-vertex-connected using the algorithm of Italiano et al. \cite{lite2}, and if it is, line $3$ outputs $V$. Otherwise, the algorithm tests whether $v$ is a strong articulation point in $G$ or not. If $v$ is not a strong articulation point in $G$ and $|K(v) \cap K^{R}(v)|\geq 2$, then line $5$ recursively computes the $2$-vccs of $G[(K(v) \cap K^{R}(v))\cup \lbrace v \rbrace]$ that contain $v$ and outputs them. This is correct by Lemma \ref{def:VKv2vc}. If $v$ is a strong articulation point in $G$, then the for loop of lines $8$--$10$ recursively computes the $2$-vccs of $G[C \cup \lbrace v \rbrace]$ that include $v$ for every strongly connected component $C$ of $G \setminus \lbrace v \rbrace$, where $G[C \cup  \lbrace v \rbrace]$ is strongly connected and $|C|\geq 2$. This is correct by Lemma \ref{def:RSAKszken2Vfkenv}.  
\begin{Theorem}\label{def:Laufzeitunterschied}
 Algorithm \ref{algo:Algorithmus1RaedJaberi} runs in $O(nm)$ time.
\end{Theorem}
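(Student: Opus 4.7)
The plan is to prove $T(n,m) \leq c \cdot nm$ by induction on the number of vertices. First I would verify that a single invocation, apart from the recursive calls on lines~5 and~10, runs in $O(n+m)$ time. Line~1 applies Tarjan's SCC algorithm; line~2 uses the linear-time $2$-vertex-connectivity test of Italiano et al.\ (Fact~\ref{def:AlgoItalianoLaufzeit}); the test on line~4 needs the two dominator trees $DT(v)$ and $DT^R(v)$, each computable in $O(n+m)$ by a standard dominator-tree algorithm, from which $K(v)\cap K^R(v)$ and the property ``$v$ is a strong articulation point'' are read off in linear time; lines~7 and~9 are further linear-time strong-connectivity computations. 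So the local cost is at most $c_1(n+m)$ for some constant $c_1$.

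Next I would show that every recursive call is made on a strictly smaller vertex set. For line~10 this is immediate: since $v$ is a strong articulation point, $G\setminus\{v\}$ has at least two SCCs, so every recursed-on component $C$ satisfies $|C|+1 \leq n-1$. For line~5 I invoke Fact~\ref{def:satzvonitaliano}: if $|K(v)\cap K^R(v)| = n-1$ held, then neither $G(v)$ nor $G^R(v)$ would contain a non-trivial dominator, whence by Fact~\ref{def:satzvonitaliano} $G$ would contain no strong articulation point and would therefore be $2$-vertex-connected---contradicting the fact that lines~2--3 did not already terminate the call.

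The bookkeeping that makes the induction close is the edge-sum bound $\sum_j m_j \leq m$ across the recursive subproblems. In the line-5 case the single subgraph trivially satisfies $m' \leq m$. In the line-10 case the subproblems $G[C_j\cup\{v\}]$ correspond to the SCCs $C_j$ of $G\setminus\{v\}$, which partition $V\setminus\{v\}$: each edge of $G$ not incident to $v$ lies inside at most one $G[C_j]$, and each edge incident to $v$ belongs to the unique $G[C_j \cup \{v\}]$ containing its other endpoint, so $\sum_j m_j \leq m$. Combined with $n_j \leq n-1$, the recurrence
\[
T(n,m) \;\leq\; c_1(n+m) + \sum_j T(n_j,m_j) \;\leq\; c_1(n+m) + c(n-1)m
\]
closes to $cnm$ for any $c \geq 2c_1$, using $m \geq n$ which holds because line~1 leaves a strongly connected graph.

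The step I expect to be the main obstacle is the edge accounting in the branching case of line~10: one must confirm that the pieces $G[C_j\cup\{v\}]$ are edge-disjoint despite all sharing the vertex $v$, so that the edges of $G$ are not double-charged across recursive calls. Once this disjointness and the strict vertex decrease are in place, the induction runs through mechanically.
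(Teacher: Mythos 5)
Your proof is correct and follows essentially the same route as the paper's: linear work per call, edge-disjointness of the subproblems $G[C_j\cup\{v\}]$ giving $\sum_j m_j\le m$, and a strict decrease in the vertex count bounding the recursion by $n$; you package this as an induction/recurrence where the paper counts $O(m)$ work per recursion level times depth at most $n$. Your explicit justification that $|K(v)\cap K^R(v)|\le n-2$ in the line-5 branch (via Fact~\ref{def:satzvonitaliano}) fills in a step the paper only asserts, but it is the same argument overall.
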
 
\begin{proof} The dominators of a flowgraph can be found in linear time \cite{lite1,lite12}. The strong articulation points of a directed graph can also be computed in linear time using the algorithm of Italiano et al. \cite{lite2}. Furthermore, the strongly connected components of a directed graph can be computed in linear time using Tarjan's algorithm \cite{lite5}. At each level of the recursion at least one vertex must be removed in lines $4$--$5$ or the set of vertices must be split in lines $6$--$10$. Hence, the recursion depth is at most $n$. Fix some recursion level. We consider the cost of the calls of the procedure excepting the recursion. For one call, the cost is linear in the size of the current subgraph. Let $G'[C_1 \cup  \lbrace v \rbrace]=(V_1,E_1),G'[C_2 \cup  \lbrace v \rbrace]=(V_2,E_2),\ldots,G'[C_t \cup  \lbrace v \rbrace]=(V_t,E_t)$ be the subgraphs of the directed graph $G'=(V',E')$ considered on this level in all calls. Then $\sum _{1 \leq i \leq t} |E_i|\leq |E'|$ since the strongly connected components of $G'$ are disjoint. The total cost at each level of the recursion is therefore $O(m)$. 
\end{proof}
\begin{Corollary}\label{def:All2VCsAlgoRJaberi}
Let $G=(V,E)$ be a directed graph. If we apply algorithm \ref{algo:Algorithmus1RaedJaberi} for every vertex $v \in V$, we can find all the $2$-vccs of $G$ in $O(n^{2}m)$ time.
\end{Corollary}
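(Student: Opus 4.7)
The plan is to invoke Algorithm \ref{algo:Algorithmus1RaedJaberi} once for each vertex $v \in V$ and take the union of the outputs. For correctness, I would argue that every $2$-vcc $C^{2vc}$ of $G$ has at least three vertices, hence contains some $v\in V$, and by the specification of Algorithm \ref{algo:Algorithmus1RaedJaberi} the call on input $v$ reports every $2$-vcc containing $v$. Therefore, aggregating the outputs over all $n$ calls yields every $2$-vcc of $G$ (possibly with repetitions). The runtime claim then follows immediately from Theorem \ref{def:Laufzeitunterschied}: each of the $n$ invocations costs $O(nm)$, so the total running time is $O(n\cdot nm)=O(n^{2}m)$.

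The only subtlety is that a given $2$-vcc $C^{2vc}$ will be output once for every $v\in C^{2vc}$, so the raw union contains duplicates. To remove them I would appeal to Fact \ref{def:2VCsAtMostOneCommonVertex}: two distinct $2$-vccs share at most one vertex, so any two vertices of a $2$-vcc uniquely identify it. Concretely, I would pick for each reported component a canonical representative (for instance, the lexicographically smallest pair of its vertices) and use this as a key to discard duplicates via sorting or hashing. Since each call outputs components of total size $O(n)$, the aggregate output has size $O(n^{2})$, and deduplication easily fits within the $O(n^{2}m)$ bound.

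I do not expect any serious obstacle: the lemma is essentially an immediate consequence of Theorem \ref{def:Laufzeitunterschied} together with the observation that every $2$-vcc must contain at least one vertex of $V$. The only thing worth writing out in the proof is a one-line justification of why the union of the outputs of the $n$ calls is exactly the set of $2$-vccs of $G$ and a brief remark on deduplication.
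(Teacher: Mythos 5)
Your proposal is correct and matches the paper's approach: the paper's proof is simply ``This follows immediately from Theorem \ref{def:Laufzeitunterschied}'', i.e.\ $n$ invocations at $O(nm)$ each. Your additional remarks on completeness (every $2$-vcc contains some vertex) and on deduplication via Fact \ref{def:2VCsAtMostOneCommonVertex} are sound elaborations of details the paper leaves implicit.
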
 
\begin{proof} This follows immediately from Theorem \ref{def:Laufzeitunterschied}.
\end{proof}
\section{Computing $2$-vertex-connected components of directed graphs}\label{def:Psec5}
In this section, we present a new algorithm for computing all the $2$-vccs of a directed graph in $O(nm)$ time. Our algorithm is based on the following Lemma.
\begin{Lemma}\label{def:RSAKszken2Vfken}
Let $G=(V,E)$ be a directed graph and let $w$ be a strong articulation point in $G$.
Let $C^{2vc}$ be a $2$-vcc of $G$. Then all vertices of $ C^{2vc}\setminus \lbrace w\rbrace$ lie in a strongly connected component $C$ of $G \setminus \lbrace w\rbrace$, i.e. $ C^{2vc}\setminus \lbrace w\rbrace \subseteq C$.
\end{Lemma}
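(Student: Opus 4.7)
The plan is to reduce to Lemma \ref{def:RSAKszken2Vfkenv} by a case distinction on whether $w$ belongs to $C^{2vc}$ or not.

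First, consider the case $w \in C^{2vc}$. Here the assumption matches the hypothesis of Lemma \ref{def:RSAKszken2Vfkenv} exactly ($w$ is a strong articulation point of $G$ and $w$ is a vertex of the $2$-vcc), so the conclusion $C^{2vc}\setminus\{w\} \subseteq C$ for some strongly connected component $C$ of $G\setminus\{w\}$ follows immediately from that lemma. Nothing new needs to be argued in this subcase.

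Next, consider the case $w \notin C^{2vc}$. The key observation is that deleting $w$ from $G$ does not delete any vertex or any edge of the induced subgraph $G[C^{2vc}]$, because all vertices of $C^{2vc}$ are different from $w$ and any edge of $G[C^{2vc}]$ has both endpoints in $C^{2vc} \subseteq V\setminus\{w\}$. Therefore $G[C^{2vc}]$ is a subgraph of $G\setminus\{w\}$. Since $C^{2vc}$ is a $2$-vcc, $G[C^{2vc}]$ is strongly connected, so any two vertices of $C^{2vc}$ lie on a common directed cycle inside $G\setminus\{w\}$. Consequently all vertices of $C^{2vc}=C^{2vc}\setminus\{w\}$ are contained in a single strongly connected component $C$ of $G\setminus\{w\}$.

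There is no real obstacle here; the statement is a strict generalization of Lemma \ref{def:RSAKszken2Vfkenv} and the only new content is the trivial subcase $w \notin C^{2vc}$, which reduces to the fact that removing a vertex outside a strongly connected induced subgraph leaves that subgraph intact and hence confines it to one SCC of the remaining graph. Combining both cases yields the claim.
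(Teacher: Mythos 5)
Your proof is correct and follows essentially the same route as the paper, which simply says ``as in Lemma \ref{def:RSAKszken2Vfkenv}'': the case $w \in C^{2vc}$ is exactly that lemma, and the case $w \notin C^{2vc}$ is the trivial observation that $G[C^{2vc}]$ survives intact in $G \setminus \lbrace w \rbrace$ and, being strongly connected, lies in a single strongly connected component. If anything, your explicit case split is slightly more careful than the paper's one-line reference.
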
 
\begin{proof} As in Lemma \ref{def:RSAKszken2Vfkenv}.  
\end{proof}
\begin{Corollary}\label{def:CorollaryRSAKszken2Vfken}
Let $G=(V,E)$ be a directed graph and let $w$ be a strong articulation point in $G$. 
The $2$-vccs of $G$ lie in the subgraphs $G[C_1 \cup \lbrace w\rbrace],G[C_2 \cup \lbrace w\rbrace],\ldots,G[C_t \cup \lbrace w\rbrace]$, where $C_1,C_2,\ldots,C_t$ are the strongly connected components of $G \setminus \lbrace w\rbrace$. 
\end{Corollary}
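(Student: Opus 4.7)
The plan is to derive the corollary as a direct consequence of Lemma \ref{def:RSAKszken2Vfken}. Fix an arbitrary $2$-vcc $C^{2vc}$ of $G$. I want to show that there exists an index $i \in \{1,\dots,t\}$ such that $C^{2vc} \subseteq C_i \cup \{w\}$, which immediately implies that $G[C^{2vc}]$ is a subgraph of $G[C_i \cup \{w\}]$.

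First I would apply Lemma \ref{def:RSAKszken2Vfken} to $C^{2vc}$, which yields that $C^{2vc} \setminus \{w\}$ is contained in some strongly connected component of $G \setminus \{w\}$, say $C_i$. There is one subtle point to address: the lemma is stated for an arbitrary strong articulation point $w$, and does not assume that $w$ belongs to $C^{2vc}$. If $w \notin C^{2vc}$, then $C^{2vc} \setminus \{w\} = C^{2vc}$ and we directly obtain $C^{2vc} \subseteq C_i \subseteq C_i \cup \{w\}$. If $w \in C^{2vc}$, then $C^{2vc} = (C^{2vc} \setminus \{w\}) \cup \{w\} \subseteq C_i \cup \{w\}$. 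In either case, $C^{2vc} \subseteq C_i \cup \{w\}$, so all the edges of $G[C^{2vc}]$ lie in $G[C_i \cup \{w\}]$.

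I do not expect any real obstacle: the lemma is already proved, and the only thing the corollary adds is the bookkeeping to handle both cases $w \in C^{2vc}$ and $w \notin C^{2vc}$ uniformly, together with the observation that $\{C_1,\dots,C_t\}$ forms a partition of $V \setminus \{w\}$, so the index $i$ provided by the lemma is unique whenever $C^{2vc} \setminus \{w\} \neq \emptyset$ (which holds because a $2$-vcc has at least three vertices by definition). No further graph-theoretic argument is required beyond this inclusion statement, so the proof will be essentially a one-line reduction to Lemma \ref{def:RSAKszken2Vfken}.
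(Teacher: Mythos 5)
Your proof is correct and matches the paper's intent: the paper states the corollary without proof precisely because it is the immediate set-theoretic consequence of Lemma \ref{def:RSAKszken2Vfken} that you spell out. The case split on whether $w \in C^{2vc}$ and the observation that $C^{2vc} \subseteq C_i \cup \lbrace w\rbrace$ forces $G[C^{2vc}]$ to be a subgraph of $G[C_i \cup \lbrace w\rbrace]$ are exactly the right (and only) details to check.
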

We now describe our algorithm for computing all the $2$-vccs of a directed graph $G=(V,E)$. 
\begin{figure}[htbp]
\begin{meinAlgorithmus}[\textbf{2VCCsAlgorithm2$(G)$}]\label{algo:simpleAlgorithmus2VsCCsRaedJaberi}\rm\quad\\[-5ex]
\begin{tabbing}
\quad\quad\=\quad\=\quad\=\quad\=\quad\=\quad\=\quad\=\quad\=\quad\=\kill
\textbf{Input:} A directed graph $G=(V,E)$.\\
\textbf{Output:} The $2$-vccs of $G$.\\
{\small 1}\> \textbf{if} $G$ is $2$-vertex-connected \textbf{then} \\
{\small 2}\>\> \textbf{Output} $V$. \\
{\small 3}\> \textbf{else}\\
{\small 4}\>\> Find a strong articulation point $w$ of $G$.\\
{\small 5}\>\> Compute the strongly connected components of $G \setminus \lbrace w\rbrace$. \\
{\small 6}\>\> \textbf{for} each strongly connected component $C$ of $G \setminus \lbrace w\rbrace$ \textbf{do} \\
{\small 7}\>\>\> Recursively compute the $2$-vccs of $G[C\cup \lbrace w \rbrace]$ and \textbf{output} them.
\end{tabbing}
\end{meinAlgorithmus}
\end{figure}
\begin{Theorem}\label{def:LaufzeitAll2VsCCsRaedJaberi}
 Algorithm \ref{algo:simpleAlgorithmus2VsCCsRaedJaberi} runs in $O(nm)$ time.
\end{Theorem}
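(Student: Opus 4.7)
The plan is to bound the work per call, argue that the total work at each recursion level is $O(m)$, and show the recursion depth is $O(n)$; multiplying these gives $O(nm)$.

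First I would verify that a single invocation on a subgraph $G'=(V',E')$ with $|V'|=n'$ and $|E'|=m'$ costs $O(n'+m')$ outside the recursive calls. Testing 2-vertex-connectivity in line~$1$ and locating a strong articulation point in line~$4$ are both handled by the algorithm of Italiano et al.\ in linear time (Facts~\ref{def:satzvonitaliano} and~\ref{def:AlgoItalianoLaufzeit}), and Tarjan's algorithm computes the strongly connected components of $G'\setminus\{w\}$ in line~$5$ in linear time. Since every subgraph $G'[C\cup\{w\}]$ on which recursion takes place is strongly connected and has at least two vertices, we have $n'\le m'$, so each call's own work is $O(m')$.

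Next I would bound the work per recursion level. Let $\mathcal{F}_\ell$ denote the family of subgraphs on which the procedure is invoked at depth $\ell$. The combinatorial heart of the argument is the inequality $\sum_{G'\in\mathcal{F}_\ell} |E(G')| \le m$, which I would prove by induction on $\ell$. A parent $G'$ with separator $w$ produces children $G'[C\cup\{w\}]$ indexed by the strongly connected components $C$ of $G'\setminus\{w\}$. Any edge of $G'$ either joins two distinct $C$'s and is discarded, or has both endpoints inside exactly one $C\cup\{w\}$, because each edge of $G'$ has at least one endpoint different from $w$ and that endpoint lies in a unique $C$. Hence the edges of $G'$ are partitioned among its children, and summing over all parents at level $\ell$ preserves the $O(m)$ bound at level $\ell+1$. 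Combined with the per-call bound, the aggregate work at each recursion level is $O(m)$.

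Finally, since $w$ is a strong articulation point of $G'$, the subgraph $G'\setminus\{w\}$ decomposes into at least two SCCs, so each child $G'[C\cup\{w\}]$ has at most $n'-1$ vertices. Thus the vertex count strictly decreases along every branch and the recursion depth is at most $n-2$. Multiplying the $O(m)$ work per level by the $O(n)$ levels yields the claimed $O(nm)$ bound. I expect the main subtlety to be the edge-partition step: one must confirm that edges incident to the shared articulation-point vertex $w$ are credited to a single child rather than duplicated across siblings, which is exactly the observation that every edge of $G'$ has a non-$w$ endpoint lying in a unique strongly connected component of $G'\setminus\{w\}$.
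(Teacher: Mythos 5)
Your proof follows the same route as the paper: linear work per call via the Italiano et al.\ and Tarjan algorithms, $O(m)$ total work per recursion level because the children's edge sets are disjoint, and recursion depth at most $n$ since the vertex count strictly decreases. Your explicit justification of the edge-disjointness (crediting each edge to the unique strongly connected component containing its non-$w$ endpoint) is just a spelled-out version of what the paper states as ``clearly,'' so the argument is essentially identical and correct.
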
 
\begin{proof} The strong articulation points of a directed graph can be computed in linear time using the algorithm of Italiano et al. \cite{lite2}. The strongly connected components of a directed graph can also be computed in linear time using Tarjan's algorithm \cite{lite5}.
Let $C_1,C_2,\ldots,C_t$ be the strongly connected components of  $G \setminus \lbrace w\rbrace$. Clearly, the edge sets of the subgraphs $G[C_1 \cup \lbrace w\rbrace],G[C_2 \cup \lbrace w\rbrace],\ldots,G[C_t \cup \lbrace w\rbrace]$ are disjoint. Thus the total cost at each recursion level is $O(m)$. Since the vertex set of a graph in a recursive call is smaller than the original vertex set, the recursion depth is at most $n$. Thus the total time is $O(nm)$.
\end{proof}
\section{The relationship between $2$-vertex-connected components and dominator trees} \label{def:rel2vccsdts}
In this section, we prove a connection between the $2$-vccs of a directed graph and dominator trees.
\begin{theorem}\label{def:threl2vccsanddt}
Let $G=(V,E)$ be a strongly connected graph and let $v$ be an arbitrary vertex in $G$. Let $C^{2vc}$ be a $2$-vcc of $G$. Then either all elements of $C^{2vc}$ are direct successors of some vertex $w\notin C^{2vc} $ or all elements $C^{2vc}\setminus \lbrace w\rbrace$ are direct successors of some vertex $w \in C^{2vc}$ in the dominator tree $DT(v)$ of the flowgraph $G(v)$.
\end{theorem}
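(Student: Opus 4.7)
The plan is to identify a single vertex $w$ in $DT(v)$ that witnesses both alternatives of the dichotomy. The natural candidate is $w_{0}$, the lowest common ancestor in $DT(v)$ of all vertices of $C^{2vc}$; if $w_{0}\in C^{2vc}$ then $w:=w_{0}$ realises the second alternative, and if $w_{0}\notin C^{2vc}$ then $w:=w_{0}$ realises the first. Thus both alternatives reduce to the single claim that every $u\in C^{2vc}\setminus \lbrace w_{0}\rbrace$ satisfies $imd(u)=w_{0}$, i.e.\ is a child of $w_{0}$ in $DT(v)$.

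To prove this claim I would argue by contradiction. Suppose there is some $u\in C^{2vc}$ with $u\neq w_{0}$ and $p:=imd(u)\neq w_{0}$. Since $w_{0}$ is an ancestor of $u$ in $DT(v)$ and $imd(u)\neq w_{0}$, the vertex $p$ lies strictly between $w_{0}$ and $u$ on the $DT(v)$-path; in particular $p$ is a proper descendant of $w_{0}$ with $p\neq u$. By the minimality of $w_{0}$ as LCA there must exist $u'\in C^{2vc}$ which is not a descendant of $p$ in $DT(v)$, for otherwise $p$ would be a common ancestor of $C^{2vc}$ strictly below $w_{0}$. Equivalently, $p$ does not dominate $u'$ in $G(v)$, so $G$ contains a path $P_{1}$ from $v$ to $u'$ avoiding $p$; note also $u'\neq p$.

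The key step is to reach $u$ from $u'$ without revisiting $p$. Since $C^{2vc}$ is a $2$-vcc, $G[C^{2vc}]$ is $2$-vertex-connected, so Menger's theorem for directed graphs supplies two internally vertex-disjoint paths from $u'$ to $u$ living inside $G[C^{2vc}]$. Because $p\neq u'$ and $p\neq u$, $p$ could only appear as an internal vertex of such a path, and internal-vertex-disjointness forces this to happen on at most one of the two paths; the other, call it $P_{2}$, avoids $p$ entirely. Concatenating $P_{1}$ with $P_{2}$ yields a $v$-to-$u$ walk in $G$ avoiding $p$, from which extracting any simple $v$-to-$u$ subpath contradicts $p=imd(u)$.

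The step I expect to be the main obstacle is the bookkeeping in the case $p\in C^{2vc}$: there both Menger paths can a priori touch $p$, and one needs the observation that internal-vertex-disjointness still leaves at least one path free of $p$. A secondary point is to verify that the two Menger paths genuinely remain inside $G[C^{2vc}]$, which follows from the induced subgraph being $2$-vertex-connected. Once these are in place, the dichotomy in the theorem statement is simply the case split on whether the LCA $w_{0}$ lies in $C^{2vc}$ or not.
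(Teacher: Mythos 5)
Your proof is correct, but it takes a genuinely different route from the paper's. The paper splits into the cases $v \in C^{2vc}$ and $v \notin C^{2vc}$; in the latter it picks a vertex $x \in C^{2vc}$ whose immediate dominator $w$ lies outside $C^{2vc}$, shows that $w$ dominates all of $C^{2vc}$, and then distinguishes whether the $w$-to-$x$ paths enter $C^{2vc}$ through a single vertex (second alternative, witness $x$) or through two distinct vertices (first alternative, witness $w$, via Menger applied after adding an auxiliary source $s$). Your LCA formulation collapses all of these cases into one claim --- every $u \in C^{2vc}\setminus\{w_0\}$ satisfies $imd(u)=w_0$, where $w_0$ is the lowest common ancestor of $C^{2vc}$ in $DT(v)$ --- proved by a single contradiction: if $p:=imd(u)$ were a proper descendant of $w_0$, minimality of the LCA yields $u'\in C^{2vc}$ not dominated by $p$, and two internally vertex-disjoint $u'$-to-$u$ paths inside the $2$-vertex-connected subgraph $G[C^{2vc}]$ let you bypass $p$, contradicting $p\in dom(u)$. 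This buys a shorter, unified argument and a sharper conclusion (the witness $w$ in the theorem is exactly the LCA of $C^{2vc}$), at the cost of invoking the global directed Menger/Whitney equivalence ($2$-vertex-connectivity gives two internally disjoint paths between every ordered pair of distinct vertices), which the paper also uses but only in one subcase. The only details worth stating explicitly are that $u'\neq u$ (since $u$ is a child of $p$ while $u'$ is not a descendant of $p$) and $u'\neq p$, both of which you essentially have.
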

\begin{proof} We consider two cases:
\begin{enumerate}[1.]
\item $v \in C^{2vc}$. In this case, all elements of $C^{2vc} \setminus \lbrace v\rbrace$ are direct successors of $v$ in $DT(v)$ since for every vertex $x$ of $C^{2vc} \setminus \lbrace v\rbrace$, there are two vertex-disjoint paths from $v$ to $x$ in $G[C^{2vc}]$, hence in $G(v)$.
\item $v \notin C^{2vc}$. Then there is a vertex $x \in C^{2vc}$ such that $imd(x)=w$ and $w \notin C^{2vc}$. We show, by contradiction, that $w$ is a dominator for every vertex of $C^{2vc}$. Assume that there is some vertex $y \in C^{2vc}$, $y\neq x$ such that $w \notin dom(y)$. Consequently, there exists a path $p$ from $v$ to $y$ not containing $w$. This path enters $C^{2vc}$ in vertex $u\in C^{2vc}$. This means that the vertices of $p$ from $v$ to $u$ are outside of $C^{2vc}$. Moreover, there are two vertex-disjoint paths from $u$ to $x$ in $G[C^{2vc}]$. Thus, there is a path from $v$ to $x$ not containing $w$. Therefore, we have $w \notin dom(x)$, which contradicts that $imd(x)=w$.
Now we consider two cases:
\begin{enumerate}[a)]
\item All paths from $w$ to $x$ are completely outside of $G[C^{2vc}]$. Then $x$ is a dominator of all vertices $y\in C^{2vc}$. (Assume that there is a path from $v$ to $y$ avoiding $x$. By the above, $w$ is on this path. We can extend $p$ inside $C^{2vc}$ to reach $x$, contradicting the assumption all paths from $w$ to $x$ are completely outside of $G[C^{2vc}]$.) \\ 
For every vertex $y\in C^{2vc}$, $x$ is the immediate dominator of $y$ in $DT(v)$, since there are two vertex-disjoint paths from $x$ to $y$. 
\item There are at least two vertex-disjoint paths $p_1,p_2$ from $w$ to $x$ such that $p_1$ enters $C^{2vc}$ in vertex $y$ and $p_2$ enters $C^{2vc}$ in vertex $y'$ with $y\neq y'$. Since there are a path from $y$ to $y'$ and a path from $y'$ to $y$ in $G[C^{2vc}]$, there are two vertex-disjoint paths from $w$ to $y$ and two vertex-disjoint paths from $w$ to $y'$ in $G(v)$. Therefore, the vertices $y,y'$ are direct successors of $w$ in $DT(v)$. Now we prove that every vertex $z\in C^{2vc}\setminus \lbrace x,y,y'\rbrace$ is also direct successor of $w$. There are two case to consider:
\begin{enumerate}[(i)]
\item All paths from $y$ to $z$ and all paths from $y'$ to $z$ have a vertex $z' \in C^{2vc}$ in common with $z'\notin \lbrace y,y',z \rbrace$. Consequently, all the paths from $y$ to $z$ contain $z'$ in $G[C^{2vc}]$, where $z'\notin \lbrace y,z \rbrace$. Hence, $z'$ is a strong articulation point in $G[C^{2vc}]$ by \cite[Lemma $2.1$]{lite2} of Italiano et al., which contradicts that $G[C^{2vc}]$ is $2$-vcc of $G$.
\item To interrupt all paths from $\lbrace y,y'\rbrace$ to $z$, one has to remove at least two vertices. We add a vertex $s\notin V$ and two edges $(s,y),(s,y')$ to $G$. Clearly, $s$ and $z$ are not adjacent. A separator of all paths from $s$ to $z$ is a set of vertices whose removal interrupts all paths from $s$ to $z$. A minimal separator of all paths from $s$ to $z$ has two vertices. By Menger's Theorem ($1927$) there are two vertex-disjoint paths from $s$ to $z$. Thus, there exist a path $p$ from $y$ to $z$ and a path $p'$ from $y'$ to $z$ in $G[C^{2vc}]$ such that $p,p'$
are vertex-disjoint. As a consequence, there are two vertex-disjoint paths from $w$ to $z$ in $G(v)$. Therefore, $z$ is direct successor of $w$ in $DT(v)$.
\end{enumerate}
\end{enumerate}
\end{enumerate}
\end{proof}
By $M(w)$ we denote the set of direct successors of vertex $w$ in the dominator tree of a flowgraph. Algorithm \ref{algo:All2VsCCsRaedJaberitherel2vccanddt} shows a new algorithm for computing all the $2$-vccs of a strongly connected graph $G$ using Theorem \ref{def:threl2vccsanddt}.
\begin{figure}[h]
\begin{meinAlgorithmus}[\textbf{All2VsCCs$(G)$}]\label{algo:All2VsCCsRaedJaberitherel2vccanddt}\rm\quad\\[-5ex]
\begin{tabbing}
\quad\quad\=\quad\=\quad\=\quad\=\quad\=\quad\=\quad\=\quad\=\quad\=\kill
\textbf{Input:} A strongly connected graph $G=(V,E)$.\\
\textbf{Output:} The $2$-vccs of $G$.\\
{\small 1}\> \textbf{if} $G$ is $2$-vertex-connected \textbf{then} \\
{\small 2}\>\> \textbf{Output} $V$. \\
{\small 3}\> \textbf{else}\\
{\small 4}\>\> Compute the strong articulation points of $G$.\\
{\small 5}\>\> Choose a vertex $v \in V$ that is not a strong articulation point of $G$.\\
{\small 6}\>\> Compute the dominator trees $DT(v)$ and $DT^{R}(v)$. \\
{\small 7}\>\> Choose a dominator tree of  $\lbrace DT(v),DT^{R}(v) \rbrace$ that contains more \\
{\small 8}\>\>\> non-trivial dominators. \\
{\small 9}\>\> \textbf{for} each vertex $w\in V$ \textbf{do}\\
{\small 10}\>\>\> \textbf{if} $|M(w)| \geq 2$ \textbf{then}  \\
{\small 11}\>\>\>\> \textbf{if} $G[M(w)\cup \lbrace w \rbrace]$ is not strongly connected \textbf{then}\\
{\small 12}\>\>\>\>\>  Compute the strongly connected components of $G[M(w)\cup \lbrace w\rbrace]$.\\
{\small 13}\>\>\>\>\>  \textbf{for} each strongly connected component $C$ of $G[M(w)\cup \lbrace w \rbrace]$ \textbf{do}\\
{\small 14}\>\>\>\>\>\> \textbf{if} $|C|\geq 3$ \textbf{then} \\
{\small 15}\>\>\>\>\>\>\> Recursively compute the $2$-vccs of $G[C]$ and \textbf{output} them.  \\
{\small 16}\>\>\>\> \textbf{else}\\
{\small 17}\>\>\>\>\> Recursively compute the $2$-vccs of $G[M(w)\cup \lbrace w \rbrace]$ and \textbf{output} them.  
\end{tabbing}
\end{meinAlgorithmus}
\end{figure}

Algorithm \ref{algo:All2VsCCsRaedJaberitherel2vccanddt} works as follows. First, line $1$ tests whether the strongly connected graph $G$ is $2$-vertex-connected using the algorithm of Italiano et al. \cite{lite2}, and if it is, line $2$ outputs $V$. Otherwise, the algorithm finds a dominator tree whose depth is at least $2$ as follows. Line $5$ chooses a vertex $v \in V$ which is not a strong articulation point of $G$. Then, line $6$ computes the dominator trees $DT(v)$ and $DT^{R}(v)$ since at least one of them has non-trivial dominators. In order to reduce the recursion depth, we  choose a dominator tree of $\lbrace DT(v),DT^{R}(v) \rbrace$ that contains more non-trivial dominators. $M(w)$ is the set of direct successors of vertex $w$ in the dominator tree that is chosen in line $7$. For each vertex $w\in V$ with $|M(w)| \geq 2$, the algorithm tests whether if $G[M(w)\cup \lbrace w \rbrace]$ is strongly connected, and if it is, line $17$ recursively computes the $2$-vccs of $G[M(w)\cup \lbrace w \rbrace]$. Otherwise, the for loop of lines $13$--$15$ recursively computes the $2$-vccs of $G[C]$ for every strongly connected component $C$ of $G[M(w)\cup \lbrace w \rbrace]$.
\begin{theorem}\label{def:runtimealgorelationsccdominatortree}
Algorithm \ref{algo:All2VsCCsRaedJaberitherel2vccanddt} runs in $O(nm)$ time.
\end{theorem}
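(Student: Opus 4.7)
The plan is to follow the same structure as the proof of Theorem~\ref{def:LaufzeitAll2VsCCsRaedJaberi}: bound the non-recursive work done at a single call, bound the recursion depth by $n$, and then argue that the total edge count summed over all subproblems at a fixed recursion level stays within $O(m)$. I would first note that every graph on which the procedure is actually invoked recursively is strongly connected (it is either an SCC of some $G'[M(w)\cup\{w\}]$ in line~15, or a strongly connected $G'[M(w)\cup\{w\}]$ in line~17), so if the current graph is $G'=(V',E')$ with $|V'|=n'$ and $|E'|=m'$ then $n'\le m'+1$, and I can absorb the $O(n')$ terms into $O(m')$.

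For the non-recursive cost of a single call on $G'$, the algorithm of Italiano et al.\ (Fact~\ref{def:AlgoItalianoLaufzeit}) tests $2$-vertex-connectivity and computes the strong articulation points in $O(m')$ time, and the dominator trees $DT(v)$ and $DT^R(v)$ can be built in $O(m')$ time as well. The delicate part is the loop in lines~9--17: for every $w$ with $|M(w)|\ge 2$ we run strong-connectivity and SCC routines on $G'[M(w)\cup\{w\}]$ at cost $O(|E(G'[M(w)\cup\{w\}])|)$. I would reduce the total cost of this loop to proving
\[
\sum_{w\in V'} \bigl|E(G'[M(w)\cup\{w\}])\bigr| \;\le\; m'.
\]
The reason is that a directed edge $(u_1,u_2)\in E'$ satisfies $\{u_1,u_2\}\subseteq M(w)\cup\{w\}$ for at most one $w$: either $w=u_1$ and $u_2$ is a child of $u_1$ in the chosen dominator tree, or $w=u_2$ and $u_1$ is a child of $u_2$, or $u_1$ and $u_2$ are both direct successors of a common vertex $w$. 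Since each non-root vertex has a unique immediate dominator, these three possibilities are mutually exclusive and in each of them $w$ is determined uniquely. Hence the non-recursive cost of a single call is $O(m')$.

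The recursion depth is at most $n$ because every recursive subproblem has a strictly smaller vertex set than $V'$. For a subgraph of the form $G'[M(w)\cup\{w\}]$ with $w\ne v$ this is clear, since the root $v$ of the chosen dominator tree is never a direct successor, so $v\notin M(w)\cup\{w\}$. For $w=v$ it follows from the choice made in lines~6--8: since $G'$ is not $2$-vertex-connected (otherwise line~2 would terminate) and $v$ is not a strong articulation point, Fact~\ref{def:satzvonitaliano} guarantees that at least one of $DT(v)$ and $DT^R(v)$ has a non-trivial dominator, and line~7 picks such a tree; this in turn means that some non-root vertex of $V'$ is not a direct child of $v$, so $M(v)\cup\{v\}\subsetneq V'$. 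Passing to SCCs in line~13 can only make vertex sets smaller.

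Finally, to propagate the edge bound across recursion levels, I would argue inductively. Let $S_\ell$ be the sum of $m'$ over all calls at recursion depth $\ell$; then $S_0=m$, and by the previous two points $S_{\ell+1}\le S_\ell$, because the subgraphs passed to recursive calls from a single invocation on $G'$ have total edge count at most $m'$ (each edge lies in at most one $G'[M(w)\cup\{w\}]$ by the unique-parent observation, and an SCC decomposition only partitions those edges further). Summing the per-call $O(m')$ work over at most $n$ levels yields the claimed $O(nm)$ bound. The main obstacle I expect is making this level-by-level edge-count estimate rigorous, since at first glance the sibling-subgraphs $G'[M(w)\cup\{w\}]$ look as if they could overlap significantly; the whole argument hinges on the fact that every non-root vertex of a dominator tree has a unique immediate dominator, and I would prefer to isolate that uniqueness property as a short preliminary lemma before assembling the runtime bound.
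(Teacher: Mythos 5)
Your proof is correct and takes essentially the same route as the paper's: the paper's entire argument is that for distinct $v,w$ the subgraphs $G[M(v)\cup \lbrace v\rbrace]$ and $G[M(w)\cup \lbrace w\rbrace]$ share at most one vertex and hence have disjoint edge sets (which is exactly your unique-immediate-dominator edge-counting claim), with the remaining bookkeeping deferred to the proof of Theorem~\ref{def:LaufzeitAll2VsCCsRaedJaberi}. You spell out more detail than the paper does --- notably the check that $M(v)\cup\lbrace v\rbrace\subsetneq V'$ when the root $v$ is not a strong articulation point, which guarantees the recursion depth bound --- but the underlying argument is identical.
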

\begin{proof} Let $v,w$ be distinct vertices in $G$. Then the edge sets of the subgraphs $G[M(v)\cup \lbrace v \rbrace],G[M(w)\cup \lbrace w \rbrace]$ are disjoint since these subgraphs have at most one vertex in common. The rest of the proof is similar to the proof of Theorem \ref{def:LaufzeitAll2VsCCsRaedJaberi}.
\end{proof}
\section{Computing $3$-vertex-connected components of a directed graph} \label{def:Psec7}
The \textit{$k$-vertex-connected components} of a directed graph are its maximal $k$-vertex-connected subgraphs. This definition is a natural generalization of $2$-vccs which are defined by Italiano et al. \cite{lite2}.
In this section, we present an algorithm for computing the $3$-vccs of a directed graph. Our algorithm is based on the following Lemma, which is the obvious generalization of Lemma \ref{def:RSAKszken2Vfken}.
\begin{Lemma}\label{def:RSAKszken3Vfken}
Let $G=(V,E)$ be a $2$-vertex-connected directed graph and let $X\subset V$ be a vertex-cut of $G$ with $|X|=2$. Let $C^{3vc}$ be a $3$-vcc of $G$. Then all vertices of $ C^{3vc}\setminus X$ lie in a strongly connected component $C$ of $G \setminus X$, i.e. $C^{3vc}\setminus X\subseteq C$.
\end{Lemma}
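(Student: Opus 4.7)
The plan is to mimic the proof of Lemma \ref{def:RSAKszken2Vfkenv}/\ref{def:RSAKszken2Vfken} essentially verbatim, with the single-vertex removal replaced by the removal of the two-element cut $X$. The key point is that $3$-vertex-connectedness of $G[C^{3vc}]$ is strong enough to survive the deletion of any $|X|\le 2$ vertices.

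First I would set $Y:=X\cap C^{3vc}$, so that $|Y|\le 2$. Because $C^{3vc}$ is a $3$-vcc, the induced subgraph $G[C^{3vc}]$ is, by definition, $3$-vertex-connected; thus for every subset of $C^{3vc}$ of size less than $3$, removing it from $G[C^{3vc}]$ leaves a strongly connected graph (and in particular $|C^{3vc}|\ge 4$, so after removing $Y$ at least two vertices remain). Applying this to $Y$, the graph $G[C^{3vc}]\setminus Y$ is strongly connected. Next I would observe that $G[C^{3vc}]\setminus Y = G[C^{3vc}\setminus Y] = G[C^{3vc}\setminus X]$, where the last equality uses that vertices of $X$ outside $C^{3vc}$ do not appear in the induced subgraph on $C^{3vc}$ anyway.

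Finally, $G[C^{3vc}\setminus X]$ is a subgraph of $G\setminus X$ on the vertex set $C^{3vc}\setminus X$. Since it is strongly connected, all of $C^{3vc}\setminus X$ must lie inside one strongly connected component $C$ of $G\setminus X$, giving the desired inclusion $C^{3vc}\setminus X\subseteq C$.

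There is essentially no main obstacle here; the only point that requires a moment's care is handling the three possible sizes $|X\cap C^{3vc}|\in\{0,1,2\}$ uniformly, which the introduction of $Y=X\cap C^{3vc}$ and the observation $|Y|<3$ does at a stroke. The argument is strictly parallel to that of Lemma \ref{def:RSAKszken2Vfken}, with ``no strong articulation point inside $C^{2vc}$'' upgraded to ``no $2$-element vertex-cut inside $C^{3vc}$''.
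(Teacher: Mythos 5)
Your proof is correct and follows essentially the same route as the paper's: use the fact that $G[C^{3vc}]$ is $3$-vertex-connected to conclude that $G[C^{3vc}\setminus X]$ is strongly connected, and then note it is a subgraph of $G\setminus X$. Your explicit handling of $Y=X\cap C^{3vc}$ is a small extra care the paper skips, but the argument is the same.
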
 
\begin{proof} Because $C^{3vc}$ is a $3$-vcc of $G$, the directed graph $G[C^{3vc}]$ does not contain any vertex-cut $Y\subset V$ with $|Y|<3$ by definition. Hence, $G[C^{3vc}\setminus X]$ is strongly connected. Furthermore, $G[C^{3vc}\setminus X]$ is a subgraph of $G \setminus X$. Therefore, $C^{3vc}\setminus X$ is a subset of a strongly connected component $C$ of $G \setminus X$, i.e. $C^{3vc}\setminus X \subseteq C$. 
\end{proof}
\begin{Corollary}\label{def:CorollaryRSAKszken3Vfken}
Let $G=(V,E)$ be a $2$-vertex-connected directed graph and let $X\subset V$ be a vertex-cut of $G$ with $|X|=2$. The $3$-vccs of $G$ lie in the subgraphs $G[C_1 \cup X],G[C_2 \cup X],\ldots,G[C_t \cup X]$, where $C_1,C_2,\ldots,C_t$ are the strongly connected components of $G \setminus X$. 
\end{Corollary}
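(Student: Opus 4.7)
The plan is to derive this corollary as an immediate consequence of Lemma \ref{def:RSAKszken3Vfken}. Let $C^{3vc}$ be an arbitrary $3$-vcc of $G$. The lemma already guarantees that $C^{3vc}\setminus X$ is contained in some strongly connected component $C_i$ of $G\setminus X$, so I would first use this to conclude that $C^{3vc} = (C^{3vc}\setminus X)\cup (C^{3vc}\cap X)\subseteq C_i\cup X$. In particular, the induced subgraph $G[C^{3vc}]$ is a subgraph of $G[C_i\cup X]$, which is the statement to be proved.

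I would then briefly check the degenerate cases, just to be sure nothing is lost. If $C^{3vc}\cap X=\emptyset$, then $C^{3vc}=C^{3vc}\setminus X\subseteq C_i\subseteq C_i\cup X$, so $C^{3vc}$ sits inside $G[C_i\cup X]$. If $C^{3vc}\cap X$ equals one or both vertices of $X$, the containment $C^{3vc}\subseteq C_i\cup X$ still holds, since $X$ has been added to every $C_j\cup X$ by construction. Finally, if the strongly connected components $C_1,\dots,C_t$ partition $V\setminus X$, the index $i$ for which $C^{3vc}\setminus X\subseteq C_i$ is uniquely determined whenever $C^{3vc}\setminus X\neq\emptyset$ (which certainly holds because a $3$-vcc must contain at least four vertices by definition of $k$-vertex-connectivity, so it cannot be absorbed into $X$).

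There is essentially no obstacle here; the real content is Lemma \ref{def:RSAKszken3Vfken}, and the corollary is just a repackaging that says: once we know each $3$-vcc cannot straddle two different strongly connected components of $G\setminus X$, the search for all $3$-vccs of $G$ may be restricted to the $t$ subgraphs $G[C_j\cup X]$. The only point worth a sentence in the write-up is to note that $X$ itself is too small to host a $3$-vcc and can therefore safely be re-attached to each piece without creating spurious candidates.
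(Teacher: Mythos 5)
Your argument is correct and is exactly the route the paper intends: the corollary is stated as an immediate consequence of Lemma \ref{def:RSAKszken3Vfken}, obtained by writing $C^{3vc} = (C^{3vc}\setminus X)\cup(C^{3vc}\cap X)\subseteq C_i\cup X$. Your extra remarks on the degenerate cases (and the observation that $|X|=2<4\leq |C^{3vc}|$, so $C^{3vc}\setminus X\neq\emptyset$) are sound but not needed beyond what the lemma already gives.
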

Algorithm \ref{algo:simpleAlgorithmus3VsCCsRaedJaberi} shows our algorithm for computing the $3$-vccs of a directed graph $G$. 
\begin{figure}[htbp]
\begin{meinAlgorithmus}[\textbf{3VCCs$(G)$}]\label{algo:simpleAlgorithmus3VsCCsRaedJaberi}\rm\quad\\[-5ex]
\begin{tabbing}
\quad\quad\=\quad\=\quad\=\quad\=\quad\=\quad\=\quad\=\quad\=\quad\=\kill
\textbf{Input:} A directed graph $G=(V,E)$.\\
\textbf{Output:} The $3$-vccs of $G$.\\
{\small 1}\> \textbf{if} $G$ is $3$-vertex-connected \textbf{then}  \\
{\small 2}\>\> \textbf{Output} $V$. \\
{\small 3}\> \textbf{else} \textbf{if} $G$ is $2$-vertex-connected \textbf{then} \\
{\small 4}\>\> Find a vertex-cut $X$ of $G$.\\
{\small 5}\>\> Compute the strongly connected components of $G \setminus X$. \\
{\small 6}\>\> \textbf{for} each strongly connected component $C$ of $G \setminus X$ \textbf{do} \\
{\small 7}\>\>\> Recursively compute the $3$-vccs of $G[C\cup X]$ and \textbf{output} them.\\
{\small 8}\> \textbf{else}\\
{\small 9}\>\> Compute the $2$-vccs of $G$.\\
{\small 10}\>\> \textbf{for} each $2$-vcc $C^{2vc}$ of $G$ \textbf{do} \\
{\small 11}\>\>\> Recursively compute the $3$-vccs of $G[C^{2vc}]$ and \textbf{output} them.
\end{tabbing}
\end{meinAlgorithmus}
\end{figure}
This algorithm works as follows.
First, line $1$ tests whether the directed graph $G$ is $3$-vertex-connected using Gabow's algorithm \cite{lite10}, and if it is, line $2$ outputs $V$. Otherwise, the algorithm tests whether $G$ is $2$-vertex-connected using the algorithm of Italiano et al. \cite{lite2}. If $G$ is $2$-vertex-connected, then
line $4$ finds a vertex-cut $X$ of $G$ using Gabow's algorithm \cite{lite10} and the for loop of lines $6$--$7$ recursively computes the $3$-vccs of $G[C\cup X]$ for each strongly connected component $C$ of $G \setminus X$. This is correct by Corollary \ref{def:CorollaryRSAKszken3Vfken}. If $G$ is neither $3$-vertex-connected nor $2$-vertex-connected, then line $9$ computes the $2$-vccs of $G$ using Algorithm \ref{algo:simpleAlgorithmus2VsCCsRaedJaberi} and the for loop of lines $10$--$11$ recursively computes the $3$-vccs of $G[C^{2vc}]$ for each $2$-vcc $C^{2vc}$ of $G$.
\begin{Theorem}\label{def:LaufzeitAll3VsCCsRaedJaberi}
Algorithm \ref{algo:simpleAlgorithmus3VsCCsRaedJaberi} runs in $O(n^{3}m)$ time.
\end{Theorem}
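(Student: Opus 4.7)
The plan is to separate the analysis into a bound on the recursion depth and a bound on the per-level cost, following the pattern of the proof of Theorem \ref{def:LaufzeitAll2VsCCsRaedJaberi}.

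First I establish that the recursion depth is at most $n$. In the $2$-vertex-connected branch (lines $3$--$7$), $X$ is a vertex cut of $G$, so $G \setminus X$ has at least two strongly connected components; hence every subgraph $G[C \cup X]$ passed to a recursive call strictly omits at least one vertex of $V$. In the non-$2$-vertex-connected branch (lines $8$--$11$), $G$ itself is not $2$-vertex-connected, so $V$ cannot be a $2$-vcc (otherwise $G[V]=G$ would be $2$-vertex-connected), and every recursive input $G[C^{2vc}]$ is strictly smaller than $G$. Either way the vertex count drops by at least one per level, so the recursion has depth at most $n$.

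Next I bound the non-recursive work of one call on a subgraph with $n_i$ vertices and $m_i$ edges. Testing $3$-vertex-connectivity using Gabow's algorithm \cite{lite10}, and in the $2$-vertex-connected case producing a size-$2$ vertex cut $X$, costs $O(n_i^2 m_i)$ (a naive alternative simply enumerates the $O(n_i^2)$ candidate pairs and tests strong connectivity of the residual graph in $O(n_i + m_i)$ time). Invoking Algorithm \ref{algo:simpleAlgorithmus2VsCCsRaedJaberi} in the non-$2$-vc branch costs $O(n_i m_i)$ by Theorem \ref{def:LaufzeitAll2VsCCsRaedJaberi}, which is absorbed into $O(n_i^2 m_i)$. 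So each call performs $O(n_i^2 m_i)$ work outside of its recursive calls.

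The decisive step is summing $n_i^2 m_i$ across one level of the recursion. In the $2$-vc branch the subgraphs $G[C_j \cup X]$ agree only inside $X$, and since $|X|=2$ they share at most the two edges with both endpoints in $X$; because the $C_j$'s are vertex-disjoint strongly connected components of $G \setminus X$, every remaining edge of $G$ lies in at most one $G[C_j \cup X]$. In the non-$2$-vc branch, distinct $2$-vccs share at most one vertex by Fact \ref{def:2VCsAtMostOneCommonVertex}, so their induced edge sets are pairwise disjoint. Therefore $\sum_i m_i = O(m)$ per level, and combining with $n_i \le n$ gives $\sum_i n_i^2 m_i \le n^2 \sum_i m_i = O(n^2 m)$ per level. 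Multiplying by the $O(n)$ recursion depth yields the claimed $O(n^3 m)$ total. The main obstacle is locking down the per-call cost of extracting the size-$2$ vertex cut; once that is pinned at $O(n_i^2 m_i)$, the level-wise summation is essentially the edge-disjointness argument already used in Theorem \ref{def:LaufzeitAll2VsCCsRaedJaberi}.
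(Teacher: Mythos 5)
Your overall strategy --- recursion depth at most $n$ multiplied by a per-level cost of $O(n^{2}m)$ --- is exactly the paper's, and your depth argument coincides with the paper's. Where you genuinely diverge is in how the per-level cost is accounted. The paper charges each call only $O(n_i m_i)$: the vertex cut is found with Gabow's algorithm \cite{lite10}, whose running time $O((n+\min\{\kappa^{5/2},\kappa n^{3/4}\})m)$ collapses to $O(n_im_i)$ when one only needs to decide whether $\kappa\ge 3$, and the call to Algorithm \ref{algo:simpleAlgorithmus2VsCCsRaedJaberi} costs $O(n_im_i)$ by Theorem \ref{def:LaufzeitAll2VsCCsRaedJaberi}. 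It then bounds the level sum crudely by $\sum_i n_im_i\le(\sum_i n_i)m\le n^{2}m$, using only $n_i\le n$, $m_i\le m$ and the bound of $n$ on the number of subgraphs per level; the edge-disjointness statements it records are not actually used in that final inequality. You instead allow a looser per-call cost of $O(n_i^{2}m_i)$ (via naive enumeration of candidate pairs) and compensate by really exploiting the near-disjointness of the edge sets to assert $\sum_i m_i=O(m)$ per level. Both accountings multiply out to $O(n^{2}m)$ per level, so your route is a legitimate alternative --- and it has the virtue of making the disjointness facts do actual work.

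The one step you should tighten is the assertion that $\sum_i m_i=O(m)$ holds across an \emph{entire} level of the recursion. The disjointness you invoke is a statement about the children of a single call: the subgraphs $G[C_j\cup X]$ overlap only in the at most two edges with both endpoints in $X$, and those edges are copied into \emph{every} child. Summed over a whole level, and then propagated down the recursion (the pair $X$ cannot be a cut of its own children, since removing it there leaves the strongly connected $C_j$, but nothing prevents some pair from reappearing as a cut in a deeper descendant), these duplicates can accumulate, so a priori one only obtains $\sum_i m_i\le m+O(n\cdot\mathrm{depth})$; combined with your $n_i^{2}$ per-call factor this can degrade the final bound on sparse graphs. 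The cheapest repair is to use the paper's per-call cost of $O(n_im_i)$ (Gabow's algorithm rather than naive pair enumeration), after which even the crude estimate $\sum_i n_im_i\le(\sum_i n_i)m\le n^{2}m$ suffices and no claim about $\sum_i m_i$ is needed. With that substitution your argument matches the paper's bound of $O(n^{3}m)$.
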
 
\begin{proof} Let $G$ be a directed graph. The vertex connectivity $\kappa$ and a corresponding separator in $G$ can be found using Gabow's algorithm \cite{lite10} in $O((n+min \lbrace \kappa^{5/2}$,$\kappa n^{3/4}\rbrace)m)$ time. Furthermore, $2$-vertex-connectivity can be tested using the algorithm of Italiano et al. \cite{lite2} in linear time. Let $C_1,C_2,\ldots,C_t$ be the strongly connected components of $G \setminus X$ (see lines $5$--$7$). Since $|X|=2$, we have $|E[X]|<3$. The edge sets of the subgraphs $G[C_1 \cup X] \setminus E[X],G[C_2 \cup X] \setminus E[X],\ldots,G[C_t \cup X] \setminus E[X]$ are disjoint. By Theorem \ref{def:LaufzeitAll2VsCCsRaedJaberi}, the $2$-vccs $C_1^{2vc},C_2^{2vc},\ldots,C_l^{2vc}$ of $G$ can be computed in $O(nm)$ time. Moreover, by Fact \ref{def:2VCsAtMostOneCommonVertex}, the edge sets of the subgraphs $G[C_1^{2vc}],G[C_2^{2vc}],\ldots,G[C_l^{2vc}]$ are disjoint (see lines $10$--$11$).
Let $G_1,G_2,\ldots ,G_l$ be the subgraphs which are considered at any level of the recursion, let $n_i$ be the number of vertices of $G_i$ and let $m_i$ be the number of edges of $G_i$, where $1 \leq i \leq l$.
Then, the total cost at each recursion level is $n_1m_1+n_2.m_2+\ldots +n_lm_l \leq (n_1+n_2+\ldots+n_l)m \leq n^{2}m$ since $n_i \leq n$ and $l \leq n$.
Since the vertex set of a graph in a recursive call is smaller than the original vertex set, the recursion depth is at most $n$. Thus the total time is $O(n^{3}m)$.
\end{proof}
It is not difficult to see that any two $k$-vccs of a directed graph have at most $k-1$ vertices in common. We can compute the $k$-vccs of a directed graph using the following Lemma, which is the obvious generalization of Lemma \ref{def:RSAKszken3Vfken}.
\begin{Lemma}\label{def:RSAKszkenkVfken}
Let $G=(V,E)$ be a $(k-1)$-vertex-connected directed graph and let $X\subset V$ be a vertex-cut of $G$ with $|X|=k-1$. Let $C^{kvc}$ be a $k$-vcc of $G$. Then all vertices of $ C^{kvc}\setminus X$ lie in a strongly connected component $C$ of $G \setminus X$, i.e. $C^{kvc}\setminus X\subseteq C$.
\end{Lemma}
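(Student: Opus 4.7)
The plan is to mirror the proof of Lemma \ref{def:RSAKszken3Vfken} verbatim, just replacing the size bound $|X|=2$ by $|X|=k-1$ and the ``no vertex-cut of size $<3$'' argument by ``no vertex-cut of size $<k$''. So the proof will be three short steps: (i) use the definition of $k$-vertex-connectedness applied to $G[C^{kvc}]$, (ii) conclude that $G[C^{kvc}\setminus X]$ is strongly connected, and (iii) observe that this strongly connected subgraph sits inside $G\setminus X$ and must therefore lie in one SCC.

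First I would unfold the definition. Since $C^{kvc}$ is a $k$-vcc of $G$, by definition the induced graph $G[C^{kvc}]$ is $k$-vertex-connected, meaning that for every $Y\subset C^{kvc}$ with $|Y|<k$, the induced subgraph on $C^{kvc}\setminus Y$ is strongly connected. Applying this with $Y=X\cap C^{kvc}$, which has size at most $|X|=k-1<k$, yields that $G[C^{kvc}\setminus X]$ is strongly connected.

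Next I would note the containment: because we only removed vertices in $X$, the graph $G[C^{kvc}\setminus X]$ is an induced subgraph of $G\setminus X$. A strongly connected induced subgraph of $G\setminus X$ must be entirely contained in a single strongly connected component $C$ of $G\setminus X$, giving $C^{kvc}\setminus X\subseteq C$ as required.

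I do not expect any genuine obstacle. The only subtlety worth mentioning is that $X$ need not be a subset of $C^{kvc}$, which is why I intersect with $C^{kvc}$ when invoking the $k$-vertex-connectedness of $G[C^{kvc}]$; the argument still goes through because $|X\cap C^{kvc}|\le k-1<k$. The statement is essentially the one-line generalization that Corollary \ref{def:CorollaryRSAKszken3Vfken} then leverages to decompose a $(k-1)$-connected graph along a minimum separator, exactly parallel to the $k=3$ case.
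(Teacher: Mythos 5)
Your proposal is correct and follows exactly the argument the paper uses (the paper itself only says the proof is ``similar to the proof of Lemma~\ref{def:RSAKszken3Vfken}'', which is the three-step argument you spell out). Your extra remark about intersecting $X$ with $C^{kvc}$ is a small but legitimate point of care that the paper's $k=3$ proof glosses over.
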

\emph{Proof}: The proof is similar to the proof of Lemma \ref{def:RSAKszken3Vfken}. \qed
\begin{Theorem}\label{def:calkvccsrunningtime}
The $k$-vccs of a directed graph can be computed in $O(mn^{2k-3})$ time.
\end{Theorem}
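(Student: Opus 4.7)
The plan is to prove the bound by induction on $k$, using a natural generalization of Algorithm \ref{algo:simpleAlgorithmus3VsCCsRaedJaberi}. The base case $k=2$ is Theorem \ref{def:LaufzeitAll2VsCCsRaedJaberi}, which gives $O(nm)=O(mn^{2\cdot 2-3})$. For the inductive step with $k\ge 3$, assume that the $(k-1)$-vccs of any directed graph can be computed in $O(m\,n^{2k-5})$ time. The algorithm for $k$-vccs first tests whether $G$ is $k$-vertex-connected using Gabow's algorithm \cite{lite10}, and outputs $V$ if so. Otherwise it tests whether $G$ is $(k-1)$-vertex-connected; if yes, Gabow's algorithm supplies a vertex-cut $X$ with $|X|=k-1$, and the algorithm computes the strongly connected components $C_1,\ldots,C_t$ of $G\setminus X$ and recurses on each $G[C_i\cup X]$. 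If $G$ is not $(k-1)$-vertex-connected, it computes the $(k-1)$-vccs via the inductive procedure and recurses on each of them.

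Correctness in the first recursive branch is exactly Lemma \ref{def:RSAKszkenkVfken}, which forces every $k$-vcc of $G$ into some $G[C_i\cup X]$. In the second branch it follows from the observation that every $k$-vertex-connected subgraph is in particular $(k-1)$-vertex-connected, hence is contained in some $(k-1)$-vcc of $G$. Each recursive call strictly reduces the vertex count: $G\setminus X$ is disconnected so each $|C_i|+(k-1)<n$ in the first branch, and in the second branch no $(k-1)$-vcc equals $V$ since $G$ itself is not $(k-1)$-vertex-connected. Therefore the recursion depth is at most $n$.

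For the running time I follow the amortization used in the proof of Theorem \ref{def:LaufzeitAll3VsCCsRaedJaberi}. At a fixed recursion level with subgraphs $G_1,\ldots,G_l$ of sizes $(n_i,m_i)$, the work on each $G_i$ outside its own recursive descent is dominated by the inductive $(k-1)$-vccs call (when invoked) at cost $O(m_i n_i^{2k-5})$; the vertex-connectivity tests and separator search via Gabow's algorithm cost only $O(n_i m_i)$ because the targeted cut has size at most $k-1$, and for $k\ge 3$ this is absorbed into $O(m_i n_i^{2k-5})$. Summing with $m_i\le m$, $n_i\le n$, and $l\le n$,
\[ \sum_{i=1}^l m_i\,n_i^{2k-5} \;\le\; m\sum_{i=1}^l n_i^{2k-5} \;\le\; m\cdot l\cdot n^{2k-5} \;\le\; m\,n^{2k-4}. \]
Multiplying by the $O(n)$ recursion depth yields the claimed bound $O(mn^{2k-3})$.

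The main obstacle I expect is making both branches of the recursion fit the same per-level amortization cleanly. The fix is to use only the crude bounds $m_i\le m$, $n_i\le n$, $l\le n$ throughout, exactly mirroring the proof of Theorem \ref{def:LaufzeitAll3VsCCsRaedJaberi}; the tighter disjointness properties of the two kinds of split (separator splits duplicate only $\le(k-1)^2$ edges among vertices of $X$, and $(k-1)$-vccs pairwise share at most $k-2$ vertices) are available if needed but are not required for the stated bound.
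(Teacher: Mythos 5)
Your argument is correct and takes essentially the same route as the paper: the paper likewise modifies Algorithm \ref{algo:simpleAlgorithmus3VsCCsRaedJaberi} by substituting $(k-1)$- and $k$-vertex-connectivity for $2$- and $3$-vertex-connectivity, and bounds the running time by induction on $k$ as recursion depth ($\leq n$) times the per-level cost of the $(k-1)$-vcc computations and cut searches. You merely spell out the per-level summation $\sum_i m_i n_i^{2k-5}\leq mn^{2k-4}$ that the paper dismisses as ``easily proved by induction.''
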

\begin{proof} We can modify Algorithm \ref{algo:simpleAlgorithmus3VsCCsRaedJaberi} by replacing \textquotedblleft $2$-vertex-connected\textquotedblright\ with \textquotedblleft $(k-1)$-vertex-connected\textquotedblright\ and by replacing \textquotedblleft $3$-vertex-connected\textquotedblright\ with \textquotedblleft $k$-vertex-connected\textquotedblright. The modified algorithm can compute the $k$-vccs of a directed graph. Its running time is bounded by the product of the recursion depth, $n$ times the cost for computing $(k-1)$-vccs and vertex-cut. One can easily prove by induction on $k$ that the running time of the modified algorithm is $O(mn^{2k-3})$.
\end{proof}

\section{Applications} \label{def:Psec8}
In this section, we consider three applications of the new algorithms.
 \begin{Problem}\label{def:2vcspcsproblem}
Given a directed graph $G=(V,E)$, find a minimum cardinality set $E^{*}\subseteq E$ such that the $2$-vccs of $G$ coincide with the $2$-vccs of the graph $G^{*}=(V,E^{*})$.  
\end{Problem} 
Clearly, the smallest $2$-vertex-connected spanning subgraph of a $2$-vertex-connected directed graph is a special case of problem \ref{def:2vcspcsproblem} when $G$ is $2$-vertex-connected. Therefore, by the results from \cite{lite14,lite8} problem \ref{def:2vcspcsproblem} is NP-hard.  
\begin{Lemma} \label{def:approxalgorfor2vcspcsproblem}
There is a $1.5$ approximation algorithm for problem \ref{def:2vcspcsproblem} with running time $O(nm)$.
\end{Lemma}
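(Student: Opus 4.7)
The plan is to reduce Problem~\ref{def:2vcspcsproblem} to solving, independently on each $2$-vcc, the minimum $2$-vertex-connected spanning subgraph problem, and to invoke a known $1.5$-approximation for the latter. Concretely, I would first run Algorithm~\ref{algo:simpleAlgorithmus2VsCCsRaedJaberi} (or Algorithm~\ref{algo:All2VsCCsRaedJaberitherel2vccanddt}) to compute all the $2$-vccs $C_1^{2vc},\ldots,C_k^{2vc}$ of $G$ in $O(nm)$ time. Then, for each subgraph $G[C_i^{2vc}]$ (which is $2$-vertex-connected by definition), I would apply a $1.5$-approximation algorithm for the smallest $2$-vertex-connected spanning subgraph of a $2$-vertex-connected directed graph, obtaining an edge set $E_i^{*} \subseteq E(G[C_i^{2vc}])$. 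Finally, I would output $E^{*} := \bigcup_i E_i^{*}$.

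For correctness I need to argue that the $2$-vccs of $G^{*}=(V,E^{*})$ coincide with those of $G$. Since each $G^{*}[C_i^{2vc}]$ is $2$-vertex-connected, $C_i^{2vc}$ is contained in some $2$-vcc of $G^{*}$. Conversely, because $G^{*}\subseteq G$, every $2$-vertex-connected subgraph of $G^{*}$ is also $2$-vertex-connected in $G$ and therefore contained in a $2$-vcc of $G$; combined with Fact~\ref{def:2VCsAtMostOneCommonVertex} (any two distinct $2$-vccs of $G$ share at most one vertex, so no $2$-vcc of $G^{*}$ can strictly contain a $2$-vcc of $G$), this forces the two families of $2$-vccs to be identical. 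Vertices outside every $2$-vcc of $G$ contribute no edges to $E^{*}$ and hence cannot belong to any new $2$-vcc of $G^{*}$.

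For the approximation ratio, let $E^{OPT}$ be an optimal solution. Since the $2$-vccs of $G^{OPT}$ must equal those of $G$, the graph $G^{OPT}[C_i^{2vc}]$ is $2$-vertex-connected for every $i$, so $|E^{OPT}\cap E(G[C_i^{2vc}])| \ge OPT_i$, where $OPT_i$ denotes the optimum for the min $2$-vcss problem on $G[C_i^{2vc}]$. By Fact~\ref{def:2VCsAtMostOneCommonVertex} the edge sets $E(G[C_i^{2vc}])$ are pairwise disjoint, and therefore $|E^{OPT}| \ge \sum_i OPT_i$. On the other hand $|E^{*}| \le \sum_i 1.5 \cdot OPT_i \le 1.5\,|E^{OPT}|$, which gives the claimed ratio.

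For the running time, computing the $2$-vccs takes $O(nm)$. The disjointness of the edge sets $E(G[C_i^{2vc}])$ implies $\sum_i |E(G[C_i^{2vc}])| \le m$, so summing a linear-time $1.5$-approximation subroutine over all $2$-vccs costs $O(m)$ in total; the overall bound is therefore $O(nm)$. The main obstacle I anticipate is justifying rigorously that no spurious $2$-vcc is created in $G^{*}$ (step~2 of correctness above); this hinges on the monotonicity of $2$-vertex-connected subgraphs under edge removal together with the at-most-one-common-vertex property, and on the fact that we restricted ourselves to edges lying inside some $2$-vcc of $G$, so this reduces to the clean combinatorial argument sketched above.
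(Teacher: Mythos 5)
Your construction is the same as the paper's: compute the $2$-vccs, discard the edges lying outside all of them, run a $1.5$-approximation for the minimum $2$-vertex-connected spanning subgraph on each $G[C_i^{2vc}]$, and use the edge-disjointness of the subgraphs $G[C_i^{2vc}]$ (which follows from Fact~\ref{def:2VCsAtMostOneCommonVertex}) to add up the per-component guarantees into a global ratio of $1.5$. Your correctness argument that no spurious $2$-vcc can appear in $G^{*}$ is in fact spelled out in more detail than in the paper, and it is sound.

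The one genuine gap is in the running-time analysis: you invoke ``a linear-time $1.5$-approximation subroutine'' for the minimum $2$-vertex-connected spanning subgraph problem, and no such algorithm is known. The available algorithms are that of Cheriyan and Thurimella \cite{lite7}, which runs in $O(km^{2})$ time, and Georgiadis's improvement \cite{lite8}, which runs in $O(m\sqrt{n}+n^{2})$ time for $k=2$ while preserving the $1.5$ guarantee; this is what the paper uses. With that subroutine the total cost of the per-component calls is $\sum_i \bigl(|E[C_i^{2vc}]|\sqrt{|C_i^{2vc}|}+|C_i^{2vc}|^{2}\bigr)$. The first term is at most $m\sqrt{n}$ by edge-disjointness, but bounding the second term requires the additional observation that $\sum_i |C_i^{2vc}|<3n$ (the paper proves this as Lemma~\ref{def:thenumberofverticesofvccs}, by arranging the overlapping components into a tree), whence $\sum_i |C_i^{2vc}|^{2}=O(n^{2})$ and the total is $O(m\sqrt{n}+n^{2}+nm)=O(nm)$. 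So the claimed $O(nm)$ bound survives, but your ``$O(m)$ in total for the subroutine calls'' needs to be replaced by this argument rather than by an appeal to a linear-time approximation algorithm.
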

\begin{proof} First, we compute all the $2$-vccs $ C^{2vc}_1,C^{2vc}_2,\ldots, C^{2vc}_t$ of the directed graph $G$ using Algorithm \ref{algo:simpleAlgorithmus2VsCCsRaedJaberi}. The edges of the set $E\setminus (E[C^{2vc}_1]\cup E[C^{2vc}_2]\cup \ldots \cup E[C^{2vc}_t])$ are irrelevant.
Let $E_{opt}$ an optimal solution for problem \ref{def:2vcspcsproblem}. Then, by Fact \ref{def:2VCsAtMostOneCommonVertex}, we have $E_{opt}=E_1\cup E_2\cup\ldots \cup E_t$, where  
$E_i$ an optimal solution for the subgraph $G[C^{2vc}_i]$. Let $opt=|E_{opt}|$ and $opt_i=|E_i|$. Then, $opt=\sum_{1\leq i\leq t } opt_i$. In $2000$, Cheriyan und Thurimella \cite{lite7} gave a $(1+1/k)$-approximation algorithm for the problem of finding a minimum-size $k$-vertex-connected spanning subgraph of a directed graph with $m$ edges. This algorithm runs in $O(km^{2})$ time. In $2011$, Georgiadis \cite{lite8} improved the running time of the algorithm of Cheriyan und Thurimella from $O(m^{2})$ to $O(m\sqrt{n}+n^{2})$ for $k=2$. 
This improved algorithm \cite{lite8} preserves the $1.5$ approximation guarantee of the Cheriyan-Thurimella algorithm for $k=2$. Let $E_i^{'}$ be an edge set obtained by running the improved algorithm \cite{lite8} on the subgraph $G[C^{2vc}_i]$. Then, we have $\sum_{1\leq i\leq t } |E_i^{'}|\leq 1.5 \sum_{1\leq i\leq t }  opt_i \leq 1.5 opt$ because the edge sets of $G[C^{2vc}_i],1\leq i \leq t,$ are disjoint. The total running time is $O(\sum_{1\leq i\leq t }(|E[C^{2vc}_i]|\sqrt{|C^{2vc}_i|}+|C^{2vc}_i|^{2}) +nm)=O(nm)$ because $\sum_{1\leq i\leq t }(|E[C^{2vc}_i]|\sqrt{|C^{2vc}_i|}+|C^{2vc}_i|^{2})\leq \sqrt{n} \sum_{1\leq i\leq t} |E[C^{2vc}_i]| +\sum_{1\leq i\leq t} |C^{2vc}_i|^{2} \leq m\sqrt{n}+\sum_{1\leq i\leq t} |C^{2vc}_i|^{2} $ and $O(\sum_{1\leq i\leq t} |C^{2vc}_i|^{2})=O(n^{2})$ by Lemma \ref{def:thenumberofverticesofvccs}. 
\end{proof}
\begin{Lemma}\label{def:thenumberofverticesofvccs}
Let $G=(V,E)$ be a directed graph and let $ C^{2vc}_1,C^{2vc}_2,\ldots, C^{2vc}_t$ be the $2$-vccs of $G$.
Then $\sum_{1\leq i\leq t} |C^{2vc}_i|<3n$.
\end{Lemma}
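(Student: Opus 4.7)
The plan is to model the vertex/$2$-vcc incidence structure as a bipartite graph. Let $H$ be the bipartite graph on node set $V\cup\lbrace C^{2vc}_1,\ldots,C^{2vc}_t\rbrace$ whose edges are the pairs $(v,C^{2vc}_i)$ with $v\in C^{2vc}_i$. Then $|E(H)|=\sum_{i=1}^{t}|C^{2vc}_i|$, so the lemma is equivalent to showing $|E(H)|<3n$.

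The central step is to prove that $H$ is a forest. Assume for contradiction that $H$ contains a cycle; since $H$ is bipartite it has the form $v_1,C^{2vc}_{i_1},v_2,C^{2vc}_{i_2},\ldots,v_k,C^{2vc}_{i_k},v_1$ for some $k\geq 2$, with the $v_j$ pairwise distinct, the $C^{2vc}_{i_j}$ pairwise distinct, and $v_j,v_{j+1}\in C^{2vc}_{i_j}$ cyclically. The case $k=2$ contradicts Fact~\ref{def:2VCsAtMostOneCommonVertex} directly, since then $v_1,v_2\in C^{2vc}_{i_1}\cap C^{2vc}_{i_2}$. For $k\geq 3$ I would prove that $U:=G[C^{2vc}_{i_1}\cup\cdots\cup C^{2vc}_{i_k}]$ is $2$-vertex-connected, which contradicts the maximality of any $C^{2vc}_{i_j}$ (strictly contained in $U$ since $|C^{2vc}_{i_{j+1}}|\geq 3$ but $|C^{2vc}_{i_j}\cap C^{2vc}_{i_{j+1}}|\leq 1$ by Fact~\ref{def:2VCsAtMostOneCommonVertex}). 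The graph $U$ is strongly connected because each $C^{2vc}_{i_j}$ is strongly connected and consecutive pieces are glued at $v_{j+1}$; clearly $|U|\geq 3$, so it remains to rule out strong articulation points. Fix any $w\in U$. Each $C^{2vc}_{i_j}\setminus\lbrace w\rbrace$ is strongly connected (trivially if $w\notin C^{2vc}_{i_j}$, otherwise by the $2$-vertex-connectivity of $C^{2vc}_{i_j}$). Distinctness of the $v_j$ implies that $w$ coincides with at most one $v_{j_0}$, so removing $w$ destroys at most the single bridge $v_{j_0}$ between $C^{2vc}_{i_{j_0-1}}\setminus\lbrace w\rbrace$ and $C^{2vc}_{i_{j_0}}\setminus\lbrace w\rbrace$; every other bridge $v_j$ survives. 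Removing at most one link from a cyclic chain of $k\geq 3$ pieces still leaves a path connecting them strongly, so $U\setminus\lbrace w\rbrace$ is strongly connected.

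With $H$ known to be a forest on $n+t$ nodes, we obtain $|E(H)|\leq n+t-1$. Since every $2$-vcc has at least $3$ vertices by definition, $|E(H)|=\sum_{i}|C^{2vc}_i|\geq 3t$. Combining the two inequalities yields $3t\leq n+t-1$, hence $t\leq(n-1)/2$, and therefore $\sum_{i}|C^{2vc}_i|\leq n+t-1\leq(3n-3)/2<3n$.

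The main obstacle is the no-strong-articulation-point verification on $U$: it requires combining the internal robustness of each $C^{2vc}_{i_j}$ (coming from its own $2$-vertex-connectivity) with the external observation that the cyclic chain of shared-vertex bridges loses at most one link per single-vertex removal, which only works because the $v_j$ are pairwise distinct and the cycle length $k$ is at least $3$.
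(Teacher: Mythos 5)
Your proof is correct, and it rests on the same structural idea as the paper's: the way the $2$-vccs intersect is acyclic, so a tree/forest bound controls $\sum_i|C^{2vc}_i|$. The difference is in how much of that idea is actually justified. The paper builds a ``component graph'' $G_c$ (one node per $2$-vcc, one node per shared vertex) and simply asserts ``Since $G_c$ is a tree or a forest\dots'' with no argument; your write-up supplies exactly the missing content. Your cycle-elimination argument is sound: the $k=2$ case is Fact~\ref{def:2VCsAtMostOneCommonVertex}, and for $k\geq 3$ the union $U$ of a cyclic chain of $2$-vccs glued at pairwise distinct vertices is strongly connected with $|U|\geq 3$, and deleting any single vertex $w$ leaves each piece $C^{2vc}_{i_j}\setminus\lbrace w\rbrace$ strongly connected while destroying at most one of the $k\geq 3$ links $v_j$, so $U\setminus\lbrace w\rbrace$ stays strongly connected; hence $U$ is $2$-vertex-connected and properly contains each $C^{2vc}_{i_j}$ (it gains at least two vertices from the neighbouring piece), contradicting maximality. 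Your bipartite incidence graph $H$ is also a cleaner bookkeeping device than $G_c$: combining $|E(H)|\leq n+t-1$ with $|E(H)|\geq 3t$ gives $t\leq (n-1)/2$ and the sharper bound $\sum_i|C^{2vc}_i|\leq (3n-3)/2$, which of course implies the stated $<3n$. In short: same route, but yours closes the gap the paper leaves open and yields a better constant.
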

\begin{proof} We construct a new graph $G_c=(V_c,E_c)$ from $G$ called a component graph as follows. For each $2$-vcc $C^{2vc}_i$ of $G$, we add a vertex $v_i$ to $V_c$. Let $C^{2vc}_i,C^{2vc}_j$ be distinct $2$-vccs of $G$. If $C^{2vc}_i,C^{2vc}_j$ have a vertex $w$ in common, then we add a vertex $w_{*}$ to $V_c$ and two undirected edges $\lbrace v_i,w_{*}\rbrace, \lbrace w_{*},v_j\rbrace$ to $E_c$. Since $G_c$ is a tree or a forest, $\sum_{1\leq i\leq t} |C^{2vc}_i|\leq |V|+|E_c|\leq n+n+t-1<3n$.
\end{proof}
 \begin{Problem}\label{def:sz2vcspcsproblem}
Given a strongly connected graph $G=(V,E)$, find a minimum cardinality set $E^{*}\subseteq E$ such that the $2$-vccs of $G$ coincide with the $2$-vccs of the directed graph $G^{*}=(V,E^{*})$ and $G^{*}$ is strongly connected.
\end{Problem} 
\begin{Lemma} \label{def:approxalgorforapproxalgorfor2vcspcsproblem}
There is an $5/3$ approximation algorithm for problem \ref{def:sz2vcspcsproblem} with running time $O(nm)$.
\end{Lemma}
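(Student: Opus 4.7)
The plan is to decompose Problem~\ref{def:sz2vcspcsproblem} into two subproblems and combine their approximations. First, I would compute the $2$-vccs $C_1^{2vc},\ldots,C_t^{2vc}$ of $G$ in $O(nm)$ time using Algorithm~\ref{algo:simpleAlgorithmus2VsCCsRaedJaberi}. Within each $C_i^{2vc}$, I would apply the Cheriyan--Thurimella $1.5$-approximation as improved by Georgiadis~\cite{lite8} to obtain an edge set $E_i' \subseteq E[C_i^{2vc}]$ with $|E_i'| \le 1.5\,opt_i$, where $opt_i$ denotes the minimum size of a $2$-vertex-connected spanning subgraph of $G[C_i^{2vc}]$. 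By the running-time argument used in Lemma~\ref{def:approxalgorfor2vcspcsproblem}, these steps together take $O(nm)$ time.

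To enforce global strong connectivity, I would form a quotient digraph $G_c$ from $G$ by contracting each $C_i^{2vc}$ to a single super-vertex (non-$2$-vcc vertices remain, and strong articulation points shared by several $2$-vccs are handled by keeping a representative vertex that bridges the corresponding super-vertices), and apply a known approximation algorithm for the minimum strongly-connected spanning subgraph (MSCSS) problem on $G_c$ with ratio $\alpha \leq 5/3$, yielding an edge set $R'$ with $|R'| \leq \alpha\, opt_{SC}(G_c)$. The output is $E^{*} = R' \cup \bigcup_{i} E_i'$; the internal $2$-vertex-connectivity of each $C_i^{2vc}$ is guaranteed by $E_i'$, while the strong connectivity of $G^{*} = (V, E^{*})$ is guaranteed by $R'$ together with the fact that each $G^{*}[C_i^{2vc}]$ is strongly connected.

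For the approximation analysis, let $E_{opt}$ be an optimal solution and decompose $E_{opt} = \bigcup_i E_{opt,i} \cup E_{opt,R}$, where $E_{opt,i} = E_{opt} \cap E[C_i^{2vc}]$ (pairwise disjoint by Fact~\ref{def:2VCsAtMostOneCommonVertex}) and $E_{opt,R}$ is the rest. Since the $2$-vccs of $G^{*}$ must coincide with those of $G$, each $E_{opt,i}$ forms a $2$-vertex-connected spanning subgraph of $G[C_i^{2vc}]$, so $|E_{opt,i}| \geq opt_i$. Because $G^{*}$ is strongly connected and each super-vertex in the contraction is strongly connected via $E_{opt,i}$, the image of $E_{opt,R}$ in $G_c$ must be a strongly-connected spanning subgraph of $G_c$, so $|E_{opt,R}| \geq opt_{SC}(G_c)$. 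Summing,
\[
|E_{opt}| \;\geq\; \sum_{i} opt_i + opt_{SC}(G_c),
\]
so
\[
|E^{*}| \;\leq\; 1.5\sum_{i} opt_i + \tfrac{5}{3}\, opt_{SC}(G_c) \;\leq\; \tfrac{5}{3}\, |E_{opt}|.
\]

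The main obstacle will be selecting an MSCSS approximation with ratio at most $5/3$ that runs in $O(nm)$ time on $G_c$ (for example, a Khuller--Raghavachari--Young-style algorithm), together with giving a clean definition of $G_c$ in the presence of strong articulation points shared among several $2$-vccs so that the inequality $|E_{opt,R}| \geq opt_{SC}(G_c)$ is provably tight enough to support the ratio.
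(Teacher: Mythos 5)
Your proposal is correct and follows essentially the same route as the paper: split the problem into Problem~\ref{def:2vcspcsproblem} (solved with the $1.5$-approximation inside each $2$-vcc) and a minimum strongly-connected spanning subgraph instance on the coarsened graph, then combine. The ``main obstacle'' you flag is exactly what the paper resolves by citing the linear-time $5/3$-approximation of Zhao, Nagamochi and Ibaraki~\cite{lite9} for MSCSS; your ratio analysis $|E^{*}|\leq 1.5\sum_i opt_i+\tfrac{5}{3}\,opt_{SC}(G_c)\leq \tfrac{5}{3}|E_{opt}|$ is the (omitted) justification behind the paper's terser argument.
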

\begin{proof} If we contract the $2$-vccs of $G$ that overlap into a super vertex, then we obtain a directed graph, which we call the coarsened graph of $G$. The edge sets within the $2$-vccs of $G$ and the edge set between $2$-vccs of $G$ are disjoint. We split the approximation problem \ref{def:sz2vcspcsproblem} into two independent porblems: problem \ref{def:2vcspcsproblem} and the minimum strongly-connected spanning subgraph problem. In $2003$, Zhao et al. \cite{lite9} gave a linear time $5/3$-approximation algorithm for the minimum strongly-connected spanning subgraph problem. We run this algorithm on the coarsened graph of $G$.\end{proof}
 \begin{Problem}\label{def:2vcs2vcvgproblem}
 Given a directed graph $G=(V,E)$, find a minimum cardinality set $E^{*}\subseteq E$ such that the $2$-vccs of $G$ coincide with the $2$-vccs of $G^{*}=(V,E^{*})$ and 
 the $2$-vccs of the coarsened graph of $G$ coincide with the $2$-vccs of the coarsened graph of $G^{*}$.
\end{Problem}
\begin{Lemma} \label{def:approxalgorfor2vcs2vcvgproblem1}
There is an $1.5$ approximation algorithm for problem \ref{def:2vcs2vcvgproblem} with running time $O(nm)$.
\end{Lemma}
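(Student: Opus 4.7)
The plan is to decouple problem \ref{def:2vcs2vcvgproblem} into two independent instances of problem \ref{def:2vcspcsproblem} whose edge sets are disjoint, and to solve each by invoking Lemma \ref{def:approxalgorfor2vcspcsproblem}.

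First I would compute the $2$-vccs $C^{2vc}_1,\ldots,C^{2vc}_t$ of $G$ using Algorithm \ref{algo:simpleAlgorithmus2VsCCsRaedJaberi} and partition the relevant edges of $G$ into $E_{\mathrm{in}}:=E[C^{2vc}_1]\cup\ldots\cup E[C^{2vc}_t]$, the edges that lie inside some $2$-vcc, and $E_{\mathrm{out}}$, the edges that run between distinct $2$-vccs; the set $E_{\mathrm{out}}$ is exactly the edge set of the coarsened graph of $G$. By Fact \ref{def:2VCsAtMostOneCommonVertex} the two sets are disjoint, and all remaining edges of $E$ are irrelevant, as was already observed in the proof of Lemma \ref{def:approxalgorfor2vcspcsproblem}.

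Next I would argue that the two feasibility conditions of problem \ref{def:2vcs2vcvgproblem} decouple along this partition. The $2$-vccs of $G^{*}$ depend only on $E^{*}\cap E_{\mathrm{in}}$, because an edge in $E_{\mathrm{out}}$ cannot lie inside any $2$-vcc of $G^{*}$ once the first constraint is enforced. Conversely, once the first constraint holds the coarsened graph of $G^{*}$ has the same super-vertex partition as the coarsened graph of $G$, and its edges are precisely those of $E^{*}\cap E_{\mathrm{out}}$, so its $2$-vccs depend only on $E^{*}\cap E_{\mathrm{out}}$. Denoting by $opt_{\mathrm{in}}$ and $opt_{\mathrm{out}}$ the optimal values of the corresponding instances of problem \ref{def:2vcspcsproblem} (one on $G$, one on the coarsened graph of $G$), any feasible $E^{*}$ must contain feasible solutions to both, and conversely the union of feasible solutions is feasible; hence $opt=opt_{\mathrm{in}}+opt_{\mathrm{out}}$.

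Finally I would apply Lemma \ref{def:approxalgorfor2vcspcsproblem} to $G$ to obtain an edge set $E^{*}_{\mathrm{in}}\subseteq E_{\mathrm{in}}$ with $|E^{*}_{\mathrm{in}}|\leq 1.5\cdot opt_{\mathrm{in}}$ in $O(nm)$ time, and apply the same lemma to the coarsened graph of $G$, which has $O(n)$ vertices (by Lemma \ref{def:thenumberofverticesofvccs}) and at most $m$ edges, to obtain $E^{*}_{\mathrm{out}}\subseteq E_{\mathrm{out}}$ with $|E^{*}_{\mathrm{out}}|\leq 1.5\cdot opt_{\mathrm{out}}$ in $O(nm)$ time. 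Outputting $E^{*}_{\mathrm{in}}\cup E^{*}_{\mathrm{out}}$ gives a feasible solution of cardinality at most $1.5(opt_{\mathrm{in}}+opt_{\mathrm{out}})=1.5\cdot opt$, and the total running time is $O(nm)$. The main obstacle is a careful justification of the decoupling claim, in particular that the coarsened graph of $G^{*}$ really coincides with the subgraph of the coarsened graph of $G$ induced by $E^{*}\cap E_{\mathrm{out}}$; this reduces to checking that $G$ and $G^{*}$ yield the same super-vertex partition under the first constraint, and that every edge of $E_{\mathrm{in}}$ becomes a self-loop after contraction and so cannot influence the $2$-vcc structure of the coarsened graph.
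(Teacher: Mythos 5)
Your proposal is correct and follows essentially the same route as the paper, whose proof is the one-line remark that one applies the method of Lemma \ref{def:approxalgorfor2vcspcsproblem} both to $G$ and to the coarsened graph of $G$; your write-up simply makes explicit the edge partition into $E_{\mathrm{in}}$ and $E_{\mathrm{out}}$, the decoupling of the two feasibility conditions, and the additivity of the optimum, all of which the paper leaves implicit.
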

\begin{proof} The proof is similar to the proof of Lemma \ref{def:approxalgorfor2vcspcsproblem} (we apply the same method on the graph $G$ and on the coarsened graph of $G$). 
\end{proof}
\section{Open problems}
We leave as an open problem whether the $2$-vccs of a directed graph that contain a certain vertex can be computed in linear time. Another open problem is whether the computing of $3$-vccs of a directed graph can be done in $O(nm)$ time.
\addcontentsline{toc}{section}{Acknowledgements}
\section*{Acknowledgements.}
The Author would like to thank Martin Dietzfelbinger for helpful comments and interesting discussions.

\addcontentsline{toc}{section}{References}


\begin{thebibliography}{00}
\bibitem [1]{lite12} S. Alstrup, D. Harel, P.W. Lauridsen, M. Thorup, Dominators in linear time.
SIAM J. Comput. $28$($6$) ($1999$) $2117$--$2132$. 
\bibitem [2]{lite1} A. L. Buchsbaum, L. Georgiadis, H. Kaplan, A. Rogers, R. E. Tarjan, J. R. Westbrook, Linear-time algorithms for dominators and
other path-evaluation problems, SIAM J. Comput. $38$($4$) ($2008$) $1533$--$1573$.
\bibitem [3]{lite4} N. Beldiceanu, P. Flener, X. Lorca, The tree constraint, in: Proc. $2$nd Int. Conf. on Integration of AI and OR Techniques in Constraint
Programming for Combinatorial Optimization Problems, CPAIOR $2005$, in: LNCS, vol. $3524$, Springer-Verlag, 2005, pp. $64$--$78$.
\bibitem [4]{lite7}  J. Cheriyan, R. Thurimella, Approximating minimum-size k-connected spanning
subgraphs via matching, SIAM J. Comput., $30(2)$ ($2000$) $528$--$560$.
\bibitem [5]{lite11} R. Diestel, Graph Theory, $2$nd ed., Springer, New York, $2000$, pp. $43$--$44$.
\bibitem [6]{lite6} Y. M. Erusalimskii, G. G. Svetlov, Bijoin points, bibridges, and biblocks of directed graphs, Cybernetics and Systems Analysis. $16$ ($1980$) $41$--$44$.
\bibitem [7]{lite13} D. Firmani, G. F. Italiano, L. Laura, A. Orlandi, F. Santaroni, Computing strong articulation points
and strong bridges in large scale graphs, SEA, LNCS $7276$, $2012$, pp. $195$--$207$.
\bibitem [8]{lite3} L. Georgiadis, Testing $2$-vertex connectivity and computing pairs of vertex-disjoint s-t paths in digraphs, in: Proc. $37$th ICALP, $2010$, pp $738$--$749$.
\bibitem [9]{lite8} L. Georgiadis, Approximating the smallest $2$-vertex connected spanning subgraph of a directed graph, in: Proc. 19th European Symposium on Algorithms, $2011$, pp. $13$--$24$.
\bibitem [10]{lite10} H. N. Gabow. Using expander graphs to find vertex connectivity. J.ACM, $53$ ($2006$)  $800$--$844$. 
\bibitem [11]{lite14} M. R. Garey, D. S. Johnson, Computers and Intractability, A Guide to the Theory
of NP-Completeness. W. H. Freeman, San Francisco, $1979$.
\bibitem [12]{lite2} G. F. Italiano, L. Laura, F. Santaroni,
 Finding strong bridges and strong articulation points in linear time, Theoretical Computer Science. $447$ ($2012$) $74$--$84$.
\bibitem [13]{lite5} R. E. Tarjan, Depth-first search and linear graph algorithms, SIAM J. Comput. $1$($2$) ($1972$) $146$--$160$.
\bibitem [14]{lite9} L. Zhao, H. Nagamochi, T. Ibaraki, A linear time $5/3$-approximation for the minimum strongly-connected spanning subgraph problem, Information Processing Letters. $86$ ($2003$) $63$--$70$.
\end{thebibliography}
\end{document}